\renewcommand{\nomgroup}[1]{%
 	\vspace{-0.9\parsep}%
}
\newenvironment{conditions*}
  {\par\vspace{\abovedisplayskip}\noindent
   \tabularx{\columnwidth}{>{$}l<{$} @{${}:{}$}
   >{\raggedright\arraybackslash}X}} {\endtabularx\par\vspace{\belowdisplayskip}}
\renewcommand{\algocf@Vline}[1]{
  \strut\par\nointerlineskip
  \algocf@push{\skiprule}
  \hbox{\bgroup\color{cyan}\vrule\egroup%
    \vtop{\algocf@push{\skiptext}
      \vtop{\algocf@addskiptotal #1}\bgroup\color{cyan}\Hlne\egroup}}\vskip\skiphlne
  \algocf@pop{\skiprule}
  \nointerlineskip}
\renewcommand{\algocf@Vsline}[1]{
  \strut\par\nointerlineskip
  \algocf@bblockcode%
  \algocf@push{\skiprule}
  \hbox{\bgroup\color{cyan}\vrule\egroup
    \vtop{\algocf@push{\skiptext}
      \vtop{\algocf@addskiptotal #1}}}
  \algocf@pop{\skiprule}
  \algocf@eblockcode%
}
\newtheorem{thm}{Theorem}
\newcommand\vmath[1]{\ensuremath{#1}\xspace}
\newcommand\vDc{data consumer} 
\newcommand\vDp{data producer}	
\newcommand\visibleComment[1]{#1}
\newcommand\vcomment[1]{}
\newcommand\vUser{\vmath{n}}
\newcommand\vUserSet{\vmath{N}}
\newcommand\vTime{\vmath{t}}
\newcommand\vSensorValue{\vmath{s}}
\newcommand\vSensorValueSet{\vmath{S}}
\newcommand\vSensorValueWithIndeces[2]{\vmath{\vSensorValue_{#1, #2}}}
\newcommand\vSensorValueIndexed{\vmath{\vSensorValueWithIndeces{\vUser}{\vTime}}}
\newcommand\vAggregate[1]{\vmath{g\left(#1\right)}}
\newcommand\vMaskingId{\vmath{\eta}}
\newcommand\vParamId{\vmath{k}}
\newcommand\vMaskingParametersIndexed{\vmath{\theta_{\vMaskingId, \vParamId}}}
\newcommand\vPrivacySetting[1]{\vmath{f_{\vMaskingId}\left(#1,
\vMaskingParametersIndexed\right)}}
\newcommand\vMaskedSet{\vmath{M}}
\newcommand\vMaskedValue{\vmath{m}}
\newcommand\vNoiseSet{\vmath{\Psi}}
\newcommand\vNoiseValue{\vmath{\psi}}
\newcommand\vLocal{\vmath{\varepsilon}}
\newcommand\vGlobal{\vmath{\epsilon}}
\newcommand\vLocalSet{\vmath{\mathcal{E}}}
\newcommand\vGlobalSet{\vmath{\mathlarger{\mathlarger{\mathlarger{\epsilon}}}}}
\newcommand\vPrivacy{\vmath{q}}
\newcommand\vPrivacySet{\vmath{Q}}
\newcommand\vUtility{\vmath{u}}
\newcommand\vUtilitySet{\vmath{U}}
\newcommand\absol[1]{\vmath{\left|#1\right|}}
\newcommand\vMean[1]{\vmath{\mu({#1})}}
\newcommand\vMedian[1]{\vmath{\text{m}\left({#1}\right)}}
\newcommand\vStandardDeviation[1]{\vmath{\sigma(#1)}}
\newcommand\vEntropy[1]{\vmath{H\left(#1\right)}}
\newcommand\vMax[1]{\vmath{\text{max}\left(#1\right)}}
\newcommand\vMin[1]{\vmath{\text{min}\left(#1\right)}}
\newcommand\vUniformVar[0]{\vmath{\upsilon}}
\newcommand\subopt[1]{\vmath{\hat{#1}}}
\newcommand\suboptrelax[1]{\vmath{#1\ssymbol{1}}}
\newcommand\reffig[1]{Figure \ref{#1}}
\newcommand\refeq[1]{(\ref{#1})}
\newcommand\refsec[1]{Section \ref{#1}}
	\newcommand{\removelatexerror}{\let\@latex@error\@gobble}
\def\BibTeX{{\rm B\kern-.05em{\sc i\kern-.025em b}\kern-.08em
T\kern-.1667em\lower.7ex\hbox{E}\kern-.125emX}}
\def\@fnsymbol#1{\ensuremath{\ifcase#1\or *\or \dagger\or \ddagger\or
   \mathsection\or \mathparagraph\or \|\or **\or \dagger\dagger
   \or \ddagger\ddagger \else\@ctrerr\fi}}
\newcommand{\ssymbol}[1]{^{\@fnsymbol{#1}}}
\begin{document}

\begin{frontmatter}

\title{Optimization of Privacy-Utility Trade-offs under Informational Self-determination}

\author[ethz]{Thomas Asikis}
\author[ethz]{Evangelos Pournaras}

\address[ethz]{Professorship of Computational Social Science\\
ETH Zurich, Zurich, Switzerland\\
\{asikist,epournaras\}@ethz.ch
}

\begin{abstract}


The pervasiveness of Internet of Things results in vast volumes of personal data
generated by smart devices of users (data producers) such as smart phones,
wearables and other embedded sensors. It is a common requirement, especially for Big Data
analytics systems, to transfer these large in scale and distributed data to
centralized computational systems for analysis. Nevertheless, third parties that
run and manage these systems (data consumers) do not always guarantee users'
privacy.
Their primary interest is to improve utility that is usually a metric related to
the performance, costs and the quality of service. There are several techniques
that mask user-generated data to ensure privacy, e.g. differential privacy.
Setting up a process for masking data, referred to in this paper as a `privacy
setting', decreases on the one hand the utility of data analytics, while, on the
other hand, increases privacy. This paper studies parameterizations of
privacy settings that regulate the trade-off between maximum utility, minimum
privacy and minimum utility, maximum privacy, where utility refers to the
accuracy in the estimations of aggregation functions. Privacy settings can be
universally applied as system-wide parameterizations and policies (homogeneous
data sharing). Nonetheless they can also be applied autonomously by each user or
decided under the influence of (monetary) incentives (heterogeneous data
sharing). This latter diversity in data sharing by informational
self-determination plays a key role on the privacy-utility trajectories as shown
in this paper both theoretically and empirically. A generic and novel
computational framework is introduced for measuring privacy-utility trade-offs
and their Pareto optimization. The framework computes a broad spectrum of such
trade-offs that form privacy-utility trajectories under homogeneous and
heterogeneous data sharing. The practical use of the framework is experimentally
evaluated using real-world data from a Smart Grid pilot project in which energy
consumers protect their privacy by regulating the quality of the shared power
demand data, while utility companies make accurate estimations of the aggregate
load in the network to manage the power grid. Over \vmath{20,000} differential
privacy settings are applied to shape the computational trajectories that in turn provide a vast potential for
data consumers and producers to participate in viable participatory data sharing
systems.


%
\end{abstract}

\begin{keyword}
	data sharing \sep
	privacy\sep
	utility \sep
	trade-off \sep
	optimization \sep
	masking\sep
	differential privacy \sep
	data transformation\sep		
	diversity \sep
	Internet of Things \sep
	Big Data\sep
\end{keyword}


\end{frontmatter}

\section{Introduction}

High data volumes are generated in real-time from users' smart devices such as
smartphones, wearables and embedded sensors. Big Data systems process these
data, generate information and enable services that support critical sectors of
economy, e.g. health, energy, transportation etc. Such systems often rely on
centralized servers or cloud computing systems. They are managed by corporate
third parties referred to in this paper as \emph{\vDc{}s} who collect the data of users
referred to respectively as \emph{\vDp{}s}. Data consumers perform data analytics for
decision-making and automation of business processes. However, data producers
are not always aware of how their data are used and processed. Terms of Use are
shown to be limited and ineffective~\cite{Bhatia2015, McDonald2008}. Security
and privacy of users' data depend entirely on \vDc{}s and as a result misuse of personal information is
possible, for instance, discrimination or limited freedom and autonomy by
personalized persuasive systems~\cite{Helbing2015, Carmichael2016,
kitchin2014data, Lekakos2014}. Giving control back to \vDp{}s by self-regulating
the amount/quality of shared data can limit these threats~\cite{Pournaras2016}.
Incentivizing the sharing of higher amount/quality of data results in improved
quality of service, i.e. higher accuracy in predictions~\cite{Soria2017,
Duan2009, Krutz2010}. At the same time, data sharing empowers \vDp{}s with an economic
value to claim.

Several applications do not require storage of the individual data generated by
\vDp{}s. Instead, \vDc{}s may only require aggregated data. For instance, Smart
Grid utility companies compute the total daily power load or the average voltage
stability to prevent possible network failures, bottlenecks, predict future
power demand, optimize power production and design pricing policies
\cite{Kursawe2011,Burns2014}. Privacy-preserving masking
mechanisms~\cite{Aggarwal2008}, i.e. differential privacy, accurately
approximate the actual aggregate values without transmitting the
privacy-sensitive individual data of \vDp{}s. Masking is a numerical
transformation of the sensor values that usually relies on the generation of
random noise and is irreversible\footnote{It is computationally infeasible to
compute the original data using the transformed data.}.

Privacy-preserving masking mechanisms are studied by calculating metrics of
privacy \vPrivacy{} and utility \vUtility{}. The former represents the amount of
personal information that a \vDp{} preserves when sharing a masked data value.
The latter represents the benefit that a \vDc{} preserves when using certain
masked data for aggregation, e.g. accuracy in data analytics. Literature
work~\cite{Aggarwal2008,Li2009,Krause2008} shows that privacy and utility are
negatively correlated, meaning that an increase on one results in decrease on the
other. This paper studies the optimization of computational trade-offs between
privacy and utility that can be used to model information sharing as
supply-demand systems run by computational markets~\cite{Pournaras2016,Li2014}.
These trade-offs can be measured by the opportunity cost between
privacy-preservation and the performance of algorithms operating on masked data,
i.e. prediction accuracy. Trade-offs can be made by choosing different
parameters for different masking mechanisms each influencing the mean or the
variance of the generated noise distributions~\cite{Aggarwal2008}. Each
parameterization results in a pair of privacy and utility values within a
trajectory of possible privacy-utility values.

The selection of parameters for masking mechanisms that maximize privacy and
utility is studied in this paper as an optimization
problem~\cite{Li2009,Krause2008}. In contrast to related work that exclusively
focuses on universal optimal privacy settings (homogeneous data sharing), this
paper studies the optimization of privacy-utility trade-offs under diversity in
data sharing (heterogeneous data sharing). This is a challenging but more
realistic scenario for participatory data sharing systems that allow
informational self-determination via a freedom and autonomy in the
amount/quality of data shared by each \vDp{}. A novel computational framework is
introduced to compute the privacy settings that realize different
privacy-utility trade-offs.

The main contributions of this article are the following: (i) The introduction
of a generalized, domain-independent, data-driven optimization framework, which
selects privacy settings that maximize privacy and utility. \vcomment{The optimization can
be constrained according to the privacy-utility requirements of \vDc{}s and
\vDp{}s. }(ii) A formal proof on how high utility can be achieved under
informational self-determination (heterogeneous data sharing) originated from
the diversity in the privacy settings selected by the users. (iii) The
introduction of new privacy and utility metrics based on statistical
properties of the generated noise. (iv) The introduction of a new masking
mechanism. (v) An empirical analysis of privacy-utility trajectories of more
than \vmath{20,000} privacy settings computed using real-world data from a Smart
Grid pilot project.

This paper is outlined as follows: Section \ref{sec:related} includes related
work on privacy masking mechanisms, privacy-utility trade-off as well as
privacy-utility maximization problems. Section \ref{sec:problem} defines the
optimization problem and illustrates the research challenge that this paper
tackles. Section \ref{sec:framework} introduces the proposed optimization
framework.
Section \ref{sec:setting} outlines the experimental settings on which the
proposed framework is tested and evaluated. Section \ref{sec:eval} shows the
results of the experimental evaluation. Finally, \refsec{sec:conclusion}
concludes this paper and outlines future work.



\section{Related Work}\label{sec:related}

Several algorithms are proposed to perform data aggregation without transmitting
the raw data. The basic idea behind such algorithms is
to irreversibly transform\footnote{A process also known as masking.} the data,
so that the original values cannot be estimated. While doing so, some of the
properties of the data should be preserved to accurately estimate aggregation
functions such as sum, count or multiplication \cite{Aggarwal2008, Duan2009,
Pournaras2016, Gentry2010, Dwork2006}. The masking process enables the \vDp{}s
to control the amount of personal information sent to \vDc{}s. These methods
also ensure that the data remain private even when a non-authorized party
acquires them, for example in the case of a man-in-the-middle attack.

\subsection{Privacy-preserving mechanisms}\label{sec:privacy-pres}

An overview of privacy-preserving mechanisms is illustrated below:

\subsubsection{Pertrubative masking mechanisms}\label{sec:per} 

Perturbative masking mechanisms allow the \vDp{}s to share their data after
masking individual values. Each value is perturbed by replacing it with a new
value that is usually generated via a process of random noise generation or
vector quantization techniques on current and past data
values~\cite{Aggarwal2008}.\vcomment{ Perturbative masking methods are often used in
privacy preserving schemes related to numerical data.} Some of the most
well-known perturbative masking methods are the following:

\textbf{Additive noise}:
A privacy-preserving approach is the addition of randomized
noise~\cite{Dwork2006,Dwork2008,Wang2013}. This approach is often used in
differential privacy schemes~\cite{Dwork2008}. Differential privacy is ensured
when the masking process prohibits the estimation of the real data values, even
if the \vDc{} can utilize previously known data values or the identity of the
individual who sends the data~\cite{Dwork2014}. Algorithms that achieve
differential privacy rely on the notion that the change of a
single element in a database does not affect the probability distribution of
the elements in the database~\cite{Dwork2006,Dwork2014,Wang2013,Wang2016}.
Furthermore, the removed element cannot be identified when comparing the version
of the database before and after the removal. This is achieved by adding a
randomly generated noise to each data value. The distribution of the random
noise is parameterized and usually is symmetric around 0
and relies on the cancellation of noises with opposite values. Increasing
the number of noise values also increases the noise cancellation, since a larger
number of opposite values are sampled. This property can be used to combine
differential privacy mechanisms in order to ensure privacy while achieving high
utility~\cite{Kairouz2017}. Statistical aggregation queries on the masked data
return an approximate numerical result, which is close to the actual result.
Differential privacy can be applied to discrete and continuous variables for the
calculation of several aggregation functions~\cite{Duan2009}.
Differential privacy can be combined with the usage of deep neural
networks~\cite{Shokri2015,Phan2016}, to apply more complex aggregation
operations on statistical databases.\vcomment{ Additive noise does not introduce extra
communication costs, since the masked value is transmitted instead of the
original value.
The storage and computational costs for implementing additive noise are often
minimal.}  \visibleComment{Furthemore, several additive noise implementations
are susceptible to noise filtering attacks, such as the use of Kalman filters
\cite{Gibson1991} or reconstruction attacks \cite{Dwork2017}. These attacks can
be prevented when the noise is not autocorrelated or the distribution of its
autocorrelation is approximately uniform.}

\textbf{Microaggregation}:
Microaggregation relies on the replacement of each data value with a
representative data value that is derived from the statistical properties of the
dataset it belongs to. A well-known application of microaggregation is
K-anonimity. K-anonymity relies on the notion that at least K original data
values are mapped to the same value~\cite{samarati-protect}. When a crisp
clustering algorithm is applied on the data, each data value is mapped to the
cluster centroid it belongs to. K is the minimum number of elements in a
cluster. Using crisp clustering techniques\footnote{Such as K-Means.} may result
in vulnerabilities to specific attacks, so membership or fuzzy clustering is
preferred instead~\cite{Nin2008}.
Membership clustering assigns a data point to multiple clusters with a
probability that is often proportional to the distance from each cluster
centroid. For membership clustering techniques, usually large amounts of data
are required. The storage and computational capacity of sensor devices cannot
usually support such processes \cite{Nin2008, Aggarwal2008}.

\textbf{Synthetic microdata generation}
An new dataset is synthesized based on the original data and multiple
imputations~\cite{Aggarwal2008}. The ``synthetic'' dataset is used instead of
the original one for aggregation calculations.\vcomment{ This paper focuses on Internet of
Things and the use of sensors. As a result the whole dataset is not available
for processing at runtime as storing older sensor values and timestamps on the
sensor device often results in high storage costs.} The application of synthetic
microdata generation on sensor devices may produce prohibitive processing and
storage costs. Furthermore, the availability of historical data on each sensor
device may not be adequate for such methods to achieve comparable performance
and efficiency with the perturbative masking methods~\cite{Aggarwal2008}.

\subsubsection{Encryption}
Several approaches use encryption to produce an encrypted set of numbers or
symbols, known as ciphers. The aggregation operations can be performed on the
ciphers and produce an encrypted aggregation value. The encrypted aggregate
value can then be decrypted to the original aggregate one, with the usage of the
corresponding private and public keys and decryption schemes, providing maximum
utility and privacy to the recipient. The encrypted individual values cannot be
transformed to the original values without the usage of the appropriate keys
from an adversary, so maximum privacy is ensured. Currently, there is extensive
research on this area, and there has been a recent breakthrough with the
development of fully homomorphic encryption
schemes~\cite{gentryThesis,Gentry:2009:FHE:1536414.1536440,Gentry2010,Gentry2011}.
Homomorphic encryption schemes though require high computational and
communication costs, especially when applied in large scale
networks~\cite{Gentry2009,duan2009differential}.

\subsubsection{Multi-party computation}
Multi-Party Computation (MPC)~\cite{Yao:1982:PSC:1398511.1382751, BennatiP17} can also be
used for privacy-preservation~\cite{Du2001} by moving data from one device to
another. In such an approach, security and integrity of the data depend on the
resilience and security of the network.\vcomment{ Furthermore, MPC requires the design and
application of certain protocols.} Most of the methods that rely on encryption
can calculate the exact sum of the data, but they can also be violated if an
attacker manages to have access to the private key or uses an algorithm that can
guess it. Furthermore, in most cases they rely on communication protocols\vcomment{ and complex
computational schemes} that burden the system with extra computational and
communication costs \cite{Prabhakaran2008}. These costs are often
prohibitive for devices such as IoT sensors and smartphone wearables in which computational
power and storage are limited~\cite{BennatiP17}.

\subsection{Privacy and computational markets}
A supply-demand system operating on a computational market of data, can be
created with the introduction of self-regulatory privacy-preserving information
systems~\cite{Pournaras2016}. Privacy preservation is utilized to create such
systems, for instance by using K-means for microaggregation and different numbers of
clusters for each sensor. Varying the number of clusters produces different
levels of privacy and utility.\vcomment{ Privacy is measured by evaluating the difference
between masked and original data values. Utility is measured by evaluating the
difference between the aggregate of the original and masked data values.} The
resulting trade-off between privacy and utility is used to create a reward
system, where \vDc{}s offer rewards for the data provided by the \vDp{}s. The
rewards are based on the demand of transformed data that enables the estimation
of more accurate aggregate values.

A reward system can be combined with pricing strategies from existing literature
on pricing private data~\cite{Li2014}, in which three actors are introduced:\vcomment{ (i)
the \textit{Data Buyer}, who represents a \vDc{} (ii) the \textit{Data Owner},
who corresponds to a \vDp{} and (iii) the \textit{Market Maker}, who creates the
market in which the first two operate.} Various pricing functions are proposed to
the \textit{Market Maker} so that the privacy-utility of both \vDc{}s and
\vDp{}s are satisfied. The optimization framework of the current paper can
utilize any parametric masking mechanism of the literature mentioned in
\refsec{sec:privacy-pres}. The output of the optimization can be used along with
pricing functions on participatory computational markets, to create fully
functional and self-regulatory data markets.

\subsection{Comparison and positioning}
The challenge of an automated selection of privacy settings that satisfy
different trade-offs is not tackled in the aforementioned mechanisms.
Privacy-utility trajectories have not been earlier studied extensively and
empirically as in the rest of this paper. The optimization of privacy-utility
trade-offs under diversity in data sharing originated from informational
self-determination is the challenge tackled in this paper. To the best of the
authors' knowledge, this challenge is not the focus of earlier work.

 
\section{Problem Definition}\label{sec:problem}
Related work \cite{Krause2008, Soria2017, Li2009, Sanchez2014, Pournaras2016} on
privacy-utility trade-offs focuses on the parameter optimization of a single
masking mechanism. A masking mechanism is often a noise generation process,
which samples random noise values from a laplace distribution and then it
aggregates it to the data, for instance the sampled noise is then added to the
data to achieve differential privacy~\cite{Dwork2006}.
The result of the optimization is usually a vector of parameter values
\vMaskingParametersIndexed, for a masking mechanism \vMaskingId and parameter
index \vParamId. The pair of the masking mechanism and the parameter values is
referred as a \textit{privacy setting} \vPrivacySetting{\vSensorValueSet} of a
set of sensor values \vmath{\vSensorValueSet{}\,\in\,R^{1}}. This privacy
setting produces a pair of privacy-utility values
\vmath{\subopt{\vPrivacy}\,,\subopt{\vUtility}}, such that:
\begin{equation}
\label{eq:maxPrivacy}
\begin{gathered}
	\subopt{\vPrivacy}\,\to\,\vMax{\vPrivacySet}
\end{gathered}
\end{equation}
\begin{equation}
\label{eq:maxUtility}
\begin{gathered}
	\subopt{\vUtility}\,\to\,\vMax{\vUtilitySet}
\end{gathered}
\end{equation}
Where \vmath{(\subopt{\vPrivacy}, \subopt{\vUtility})} is a (sub-optimal)
privacy-utility pair of values, which is computed by an optimization algorithm
that searches for the optimal privacy-utility values pair.
\vmath{\vMax{\vPrivacySet}}, \vmath{\vMax{\vUtilitySet}} are the maximum privacy
and utility values of a privacy value set \vPrivacySet{} and a utility value set
\vUtilitySet{}. These sets are generated by the application of a masking
mechanism.

The optimization of an objective function that satisfies both Relations
(\ref{eq:maxPrivacy}) and (\ref{eq:maxUtility}) simultaneously is an NP-hard
problem \cite{Krause2008}, \visibleComment{in the case that privacy and utility
are orthogonal (\vmath{\vPrivacy \perp \vUtility}) or
opposite\footnote{\visibleComment{In the case that privacy and utility are
positive correlated (\vmath{\vPrivacy \uparrow, \vUtility \uparrow}), the
problem is reduced to NTIME-hard, and especially in the case privacy and utility
are proportional \vmath{\vPrivacy \propto \vUtility} to DTIME-hard
\cite{BOOK1974213}. The solution of the problem is provided by linearly
evaluating all pairs of privacy and utility values once without comparing to all
other pairs.}} (\vmath{\vPrivacy \uparrow, \vUtility \downarrow})}, and often
intractable to solve, since
privacy-utility trade-offs prohibit the satisfaction of both Relations
(\ref{eq:maxPrivacy}) and (\ref{eq:maxUtility}).
Particularly, maximizing simultaneously utility and privacy usually yields
sub-optimal values, which are lower than the corresponding optimal values
computed by optimizing each metric separately \cite{Krause2008}.
Furthermore, such optimization is applicable for statistical databases
\cite{Dwork2014, Aggarwal2008}, where data are stored in a centralized system.
In such case, a specific privacy setting is chosen by the designer/administrator
of the system. As a result, this approach relies on the assumption that a
specific privacy setting should be used by all \vDp{}s.

However, remaining to a fixed privacy setting may be limited for \vDp{}s,
especially when a \vDp{} wishes to switch to a different privacy setting to
improve privacy further. In this case, the optimization of different objective
functions is formalized in the following inequalities:

\begin{equation}
\label{ineq:privacyIneq}
\begin{gathered}
	\suboptrelax{\vPrivacy}\,>\,\subopt{\vPrivacy}  +
	\delta 
	\land
	\suboptrelax{\vUtility}\,>\,\subopt{\vUtility} + c
\end{gathered}
\end{equation}
Where \vmath{\delta} measures the change in privacy, which denotes whether the
\vDp{}s require higher privacy, \vmath{\delta > 0}, or lower privacy
\vmath{\delta < 0}, from the system. \vmath{c} measures the change in utility,
which denotes whether the \vDc{} demands lower utility, \vmath{c > 0}, or higher
utility \vmath{c < 0}, from the system. Finally,
\vmath{(\suboptrelax{\vPrivacy}, \suboptrelax{\vUtility})} denotes a new
(sub-optimal) pair of privacy-utility values, computed by an optimization
algorithm that searches for the optimal pair of privacy-utility values with
respect to the privacy requirements of \vDp{} and the utility requirements of
\vDc{} expressed by \vmath{c} and \vmath{\delta} respectively.

The optimization of an objective function to satisfy Relation
(\ref{ineq:privacyIneq}) is also based on the assumption that all \vDp{}s
agree to use the same privacy setting. This means that \vDp{}s may acquire a different privacy level by changing the value of \vmath{\delta} via the collective selection of a different privacy setting. Consequently, a single privacy
setting is generated and it produces a  pair of privacy-utility values, which
satisfy Inequality (\ref{ineq:privacyIneq}). The value of \vmath{\delta} is
determined via a collective decision-making process applied by the \vDp{}s, e.g.
voting between different privacy-utility requirements.
Such a system is referred to as a \textit{homogeneous} privacy system, where
\vDp{}s are able to influence the amount of privacy applied on the data by
actively participating in the market, nevertheless they all share the same value for
\vmath{\delta}.
The \vDc{} can bargain for higher utility by offering higher rewards to the
\vDp{}s to lower their privacy requirements.

Another challenge that arises is the optimization between privacy and
utility when each user decides and self-determines a preferred
privacy setting instead of using a universal privacy setting. In such a
scenario, inequality (\ref{ineq:privacyIneq}) is substituted by the following set of inequalities:

\begin{equation}
\label{ineq:heterogeneous}
\begin{gathered}
	(\suboptrelax{\vPrivacy}_{1}>\subopt{\vPrivacy}  +
	\delta_{1})\land
	\ldots\land\\
	(\suboptrelax{\vPrivacy}_{\vUser{}}>\subopt{\vPrivacy}  +
	\delta_{\absol{\vUserSet}})\land(
	\suboptrelax{\vUtility}>\subopt{\vUtility} + c)
\end{gathered}
\end{equation}
Where \vmath{\delta_{n}} measures the change in privacy which denotes whether a
\vDp{} \vmath{n} belonging to a set of users \vmath{\vUserSet} requires higher
privacy, \vmath{\delta_{n} > 0}, or lower privacy \vmath{\delta_{n} < 0}.
\vmath{\suboptrelax{\vPrivacy}_n} denotes a new (sub-optimal) privacy value for
each \vDp{} \vUser{}. The value is computed by an optimization algorithm that
searches for the optimal privacy value with respect the \vDp{}'s privacy
requirements expressed by \vmath{\delta_{\vUser}}.

A system in which the inequalities of Relation (\ref{ineq:heterogeneous}) hold
is referred to as an \textit{heterogeneous} privacy system, where each \vDp{}
self-determines and autonomously applies a privacy setting based on a preferred
privacy value and an expected reward for increasing system utility.\vcomment{ Concluding,
based on the above, the challenges that this article addresses are the
following: (i) The design of a computational framework for the optimization of
privacy-utility trade-offs in homogeneous as well as heterogeneous information
sharing systems. (ii) The analytical and empirical study of these trade-offs.}
 
\section{Framework}\label{sec:framework}
The design of a new privacy preserving optimization framework is introduced in
this section to tackle the challenges posed in \refsec{sec:problem}. Additive
noise masking mechanisms require a lower number of parameters in general and they are often
used in privacy-utility optimization~\cite{Aggarwal2008, Dwork2014, Krause2008}.
Each privacy setting is illustrated as an ellipse\footnote{The elliptical shape
is chosen for the sake of illustration and it indicates a symmetrical
distribution of privacy-utility values, generated by a privacy setting, within the ellipse area.} in
\reffig{fig:optim:a}.
Each point within the ellipse is a possible privacy-utility pair of values. The
ellipse center is chosen based on the privacy and utility mode of the setting.
The mode is the value with the highest density. In symmetric distributions, it
can be measured via the mean.
The vertical radius of the ellipse denotes the dispersion of utility values,
while horizontal radius denotes the dispersion of privacy values.
Additive noise is stochastic, which means that applying the same privacy setting
on the same dataset yields varying privacy-utility values. The choice of an
optimal privacy-utility pair cannot be achieved by only evaluating the mode of privacy and utility for each privacy setting. If the
privacy-utility values of a privacy setting with high utility mode are varying
to a large extend, there is high probability that unexpected non-optimal
values are observed.
To overcome this challenge, the objective function of the parameter optimization
algorithm selects the parameters that minimize the dispersion\footnote{This
refers to the dispersion measures of the privacy and
utility distributions. If the values belong to a gaussian distribution, then the standard deviation is
used to measure the dispersion. Since this is not always the case, other
measures of scale can be used, such as the Inter-Quantile Range(IQR).} of
privacy-utility values while maximizing the expected utility.

\begin{figure}[ht!]
  \subfloat[Privacy-utility trajectory]{%
       \includegraphics[width=0.45\linewidth]{./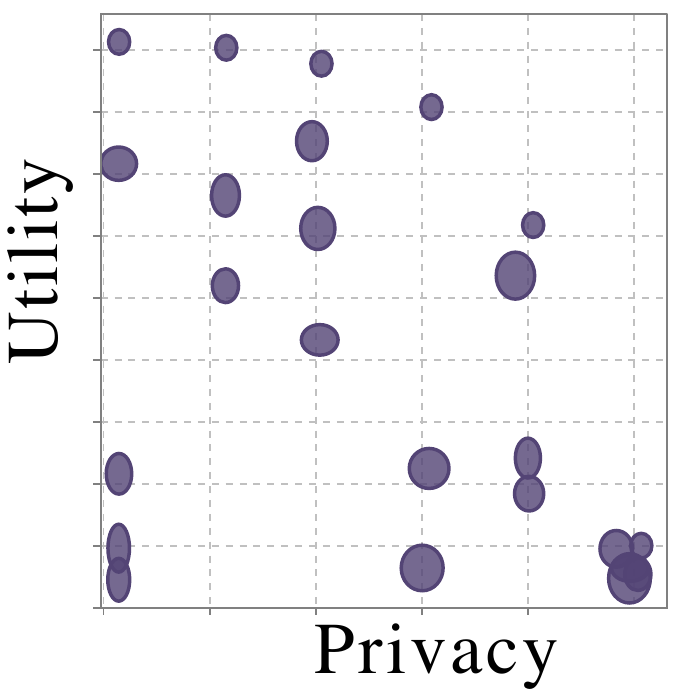}
       \label{fig:optim:a}}\hfill
  \subfloat[Binning of the privacy range]{%
        \includegraphics[width=0.45\linewidth]{./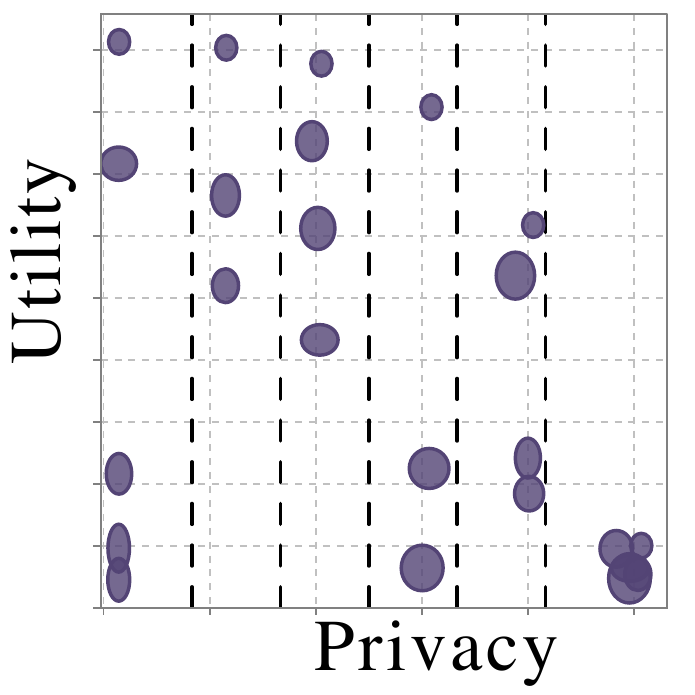}
        \label{fig:optim:b}}\hfill\\
  \subfloat[Evaluation via objective function]{%
        \includegraphics[width=0.45\linewidth]{./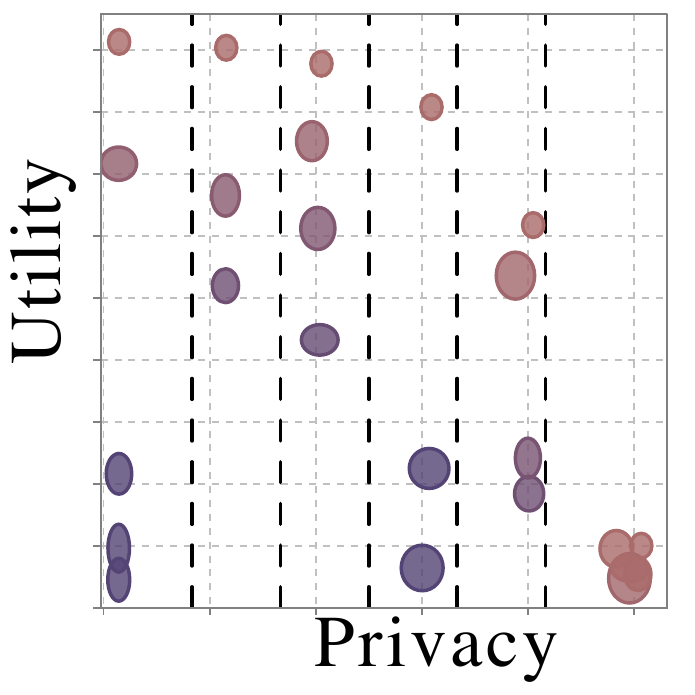}
        \label{fig:optim:c}}\hfill
  \subfloat[Bin optimization]{%
        \includegraphics[width=0.45\linewidth]{./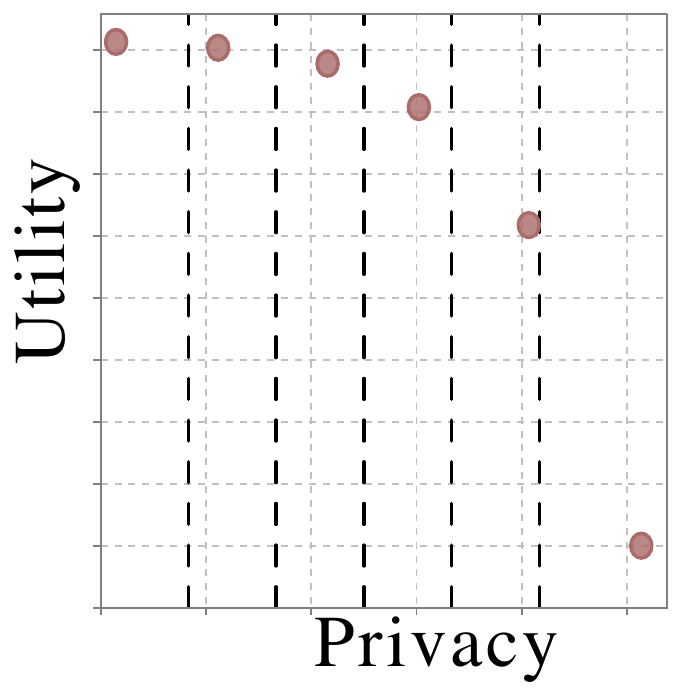}
        \label{fig:optim:d}}\hfil\\
  \subfloat[Objective function scale]{%
        \includegraphics[width=1.0\linewidth]{./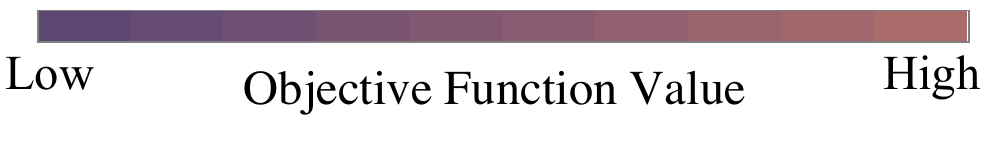}
        \label{fig:optim:e}}\hfil\\
  \caption{A graphical representation of the algorithm. Each ellipse denotes the
  privacy-utility values of a privacy setting. In Figures
  \ref{fig:optim:c} and \ref{fig:optim:d} the varying color denotes the fitness
  value. A lighter red color denotes higher fitness.}
  \vspace{-0.27in}
  \label{fig:stats:global} 
\end{figure}

A \vDp{} selects any privacy setting, among different ones, that satisfies
personal privacy requirements. The proposed framework divides the range of
privacy values in a number of equally sized bins, as illustrated in
\reffig{fig:optim:b}. Within each bin, a fitness value is calculated for each
privacy setting, based on privacy-utility mode and dispersion Each privacy
setting produces privacy values with low dispersion. This is done by applying a
lower bound constraint on privacy and utility constraint on the dispersion of
privacy values and evaluating only privacy settings that satisfy this
constraint, as shown in \reffig{fig:optim:c}.
The optimization framework evaluates several privacy settings, to find the
parameters that achieve maximum privacy-utility values that vary as little as
possible. This is illustrated in \reffig{fig:optim:d} in which the ellipses with
the highest utility mode and lowest utility dispersion are filtered for each
privacy bin.
\vcomment{The resulting privacy settings are then provided to the \vDp{}s. }

In a homogeneous data sharing system, a universal privacy setting is selected by
the \vDp{}s, via, for instance, voting~\cite{nurmi2012comparing}. Alternatively,
in a heterogeneous system, the \vDp{}s self-determine the privacy setting
independently. Theorem~\ref{thm:heterogeneous} below proves that aggregation
functions can be accurately approximated (utility can be maximized) even if
different privacy settings from the same of different masking mechanisms are
selected.

\begin{thm}\label{thm:heterogeneous}

Let the transformation of \absol{I} disjoint subsets of sensor values
\vmath{\vSensorValueSet_i} into the respective subsets of masked values
\vmath{\vMaskedSet_i} using a certain privacy settings \vmath{f_i}
for each such transformation. It holds that the aggregation of the generated
multisets of masked values \vmath{\vMaskedSet_i} approximates the aggregation of
the sensor values multiset \vmath{\vSensorValueSet_i}:

\begin{equation}\label{eq:masking}
g(\bigcup_{i=1}^{\absol{I}}\vMaskedSet_i) \to g(\vSensorValueSet),
\end{equation}

\noindent given that the commutative and associative properties hold between
each of the privacy settings \vmath{f_i} and the aggregation function \vmath{g}.

\end{thm}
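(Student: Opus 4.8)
The plan is to exploit the additive (perturbative) structure of the masking mechanisms of \refsec{sec:framework} together with the stated algebraic hypotheses, so as to separate — subset by subset — the contribution of the true sensor values from that of the injected noise, and then to argue that the accumulated noise cancels. It is cleanest to carry this out first for the prototypical aggregation $g=\sum$ and then observe that the argument transcribes verbatim to any commutative and associative $g$ upon replacing $+$ by the binary operation $\circ$ that induces $g$: the commutativity and associativity assumed in the statement (between each $f_i$ and $g$) are exactly what license the free regrouping of terms used below, and associativity of $g$ itself is what lets $g$ over a union of multisets be computed subset by subset.

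\emph{Step 1 (per-subset decomposition).} Since $f_i$ is additive, every element of $M_i$ has the form $s+\psi$ with $s\in S_i$ and $\psi$ a noise value sampled from the distribution parameterised by $\theta_i$; collect these into the noise multiset $\Psi_i$. Regrouping the $|S_i|$ sensor terms and the $|S_i|$ noise terms separately,
\begin{equation*}
g(M_i)\;=\;\sum_{s\in S_i}s\;+\;\sum_{\psi\in\Psi_i}\psi\;=\;g(S_i)+g(\Psi_i).
\end{equation*}
\emph{Step 2 (assembly over the collection).} Because the $S_i$ are disjoint, $\bigcup_i S_i=S$ as a multiset, so $\sum_i g(S_i)=g(S)$; applying associativity and commutativity across the $|I|$ subsets,
\begin{equation*}
g\Bigl(\bigcup_{i=1}^{|I|}M_i\Bigr)\;=\;\sum_{i=1}^{|I|}g(M_i)\;=\;g(S)\;+\;\sum_{i=1}^{|I|}g(\Psi_i)\;=\;g(S)+g\Bigl(\bigcup_{i=1}^{|I|}\Psi_i\Bigr).
\end{equation*}
Thus the masked aggregate is the true aggregate perturbed by a single accumulated noise term, and crucially this holds no matter how many distinct settings $f_i$ — or even distinct mechanisms — were used: heterogeneous data sharing costs nothing in this decomposition, which is the content of the theorem beyond the classical homogeneous case.

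\emph{Step 3 (vanishing of the accumulated noise).} Here I would invoke the property already used throughout \refsec{sec:related} and \refsec{sec:framework}: each admissible masking mechanism samples noise from a distribution symmetric about $0$ (about the identity of $\circ$ in the general case). Then $\mathbb{E}[g(\Psi_i)]=0$ for every $i$, hence $\mathbb{E}\bigl[g(\bigcup_i\Psi_i)\bigr]=0$, and as the total number of sampled noise values $\sum_i|S_i|$ grows the opposite-signed samples increasingly cancel, so that the accumulated perturbation becomes negligible relative to $g(S)$, which itself scales with $\sum_i|S_i|$; this yields \refeq{eq:masking}. I expect this last step to be the main obstacle: it is the only genuinely probabilistic ingredient, it is where the (implicit) assumption that every $f_i$ uses centred noise is indispensable, and it requires care to state the convergence correctly — in expectation, in probability via a concentration bound on $\sum_i g(\Psi_i)$, or as a vanishing relative error — rather than as the deterministic limit that the symbol $\to$ superficially suggests. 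The algebraic Steps~1--2, by contrast, are routine once one reads the commutativity/associativity hypotheses as the license to regroup signal and noise.
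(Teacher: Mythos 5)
Your proposal is correct and follows essentially the same route as the paper: your Step~1 is the paper's distributive relation $g(f(S_i,\Psi_i))=f(g(S_i),g(\Psi_i))$, your Step~2 is its recursive ``extended aggregation function'' property over disjoint subsets, and your Step~3 is its assumption that the aggregated noise tends to the neutral element $\iota$ of $g$. The only difference is that where you sketch a probabilistic justification of noise cancellation (and rightly flag the mode of convergence as delicate), the paper simply postulates $g(\vNoiseSet_i)\to\iota$ as a property of admissible noise generation, citing the Laplace case.
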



\begin{proof}
Let a multiset of real sensor values $\vSensorValueSet \subseteq R^{1}$ and
$\absol{I}$ disjoint subsets of $ \vSensorValueSet $ such that:
\begin{equation} 
\label{eq:subsets}
	\bigcup_{i=1}^{\absol{I}} \vSensorValueSet_i = \vSensorValueSet,
	\vSensorValueSet_i \neq \emptyset \,\, \forall{}\, i
	\in \{1,...,\absol{I}\}
\end{equation}
Let a privacy setting $ f:\vSensorValueSet{}, \vNoiseSet{} \to \vMaskedSet $ be
a pairwise element operation between a set of sensor values \vSensorValueSet and a
set of noise values \vNoiseSet, that transforms each sensor value $\vSensorValue
\in \vSensorValueSet{}$ \visibleComment{by aggregating it} with a randomly
selected noise value \vNoiseValue from \vNoiseSet to produce a masked value \vMaskedValue:
\begin{equation}\label{eq:masking}
	\begin{gathered}
		f(\vSensorValueSet, \vNoiseSet) \visibleComment{= g(\vSensorValueSet \cup
		\vNoiseSet)} = \vMaskedSet \Leftrightarrow\\
		f(\vSensorValue, \vNoiseValue) \visibleComment{ = g(\{\vSensorValue,
		\vNoiseValue\})} = \vMaskedValue
	\end{gathered}
\end{equation}

Let  $ g:A \to R^{1} $ be an aggregation function which aggregates all
elements of real values multisets $
\vSensorValueSet,\,\vNoiseSet,\,\vMaskedSet \subseteq A \subseteq R^{1} $ into a
single real value $g(A) = z_A \in R^{1}$\vcomment{ and can also be expressed as a
pairwise recursive operation between elements of those sets.
The aggregation function is defined as $g(A) = z_A$}.
Assume that \vmath{g:A\,\to\,R^{1}} is defined in a recursive manner so that it
satisfies the following equation for a multiset $A$ and any union of all
possible combinations of disjoint subsets $A_i$ that satisfy Relation
\refeq{eq:subsets}:
\begin{equation} 
\label{eq:recursive}
	\begin{gathered}
		g(A) = g(\bigcup_{i=1}^{\absol{I}}A_i) = g( \bigcup_{i=1}^{\absol{I}}g(A_i))\\
	\end{gathered}
\end{equation}
\visibleComment{According to literature \cite{Beliakov2007AggregationFA} the
family of aggregation functions that Relation \ref{eq:recursive} applies to
is referred to as extended aggregation functions\footnote{\visibleComment{A
subset of those functions are the averaging functions, which include aggregations such as the mean, weighted
mean, Gini mean, Bonferoni mean, Choquet integrals etc.}}.} The pairwise
operation between \vSensorValue and \vNoiseValue in \vmath{f} is designed in
such way that it satisfies the commutative and associative properties when
combined with the pairwise operation of \vmath{g}:
\begin{equation}\label{eq:distributive}
	\begin{gathered}
		g(f(\vSensorValueSet, \vNoiseSet)) \stackrel{\ref{eq:masking}, \ref{eq:recursive}}{=} f(g(\vSensorValueSet),
		g(\vNoiseSet))\\
	\end{gathered}
\end{equation}
where \vmath{g(\vNoiseSet) \to \iota}, \vmath{\iota} is the strong neutral element
of the extended aggregation function \vmath{g}, such that:
\begin{equation}\label{eq:identity}
	\begin{gathered}
 		\vmath{g(g(A)\cup{}\iota)\,=\,g(A)}\,\Rightarrow\\
 		g(g(A)\cup{}g(\vNoiseSet))\,\to\,g(A)
 	\end{gathered}
 \end{equation}
This property is used in the noise cancellation of \refsec{sec:per}. Let
$\absol{I}$ multisets $\vNoiseSet_i$ of noise that satisfy Relation \ref{eq:subsets}, then the following relation holds:
\begin{equation}
\label{eq:break}
	\begin{gathered}
		g(\vMaskedSet_{i}) = g(f(\vSensorValueSet_i,\vNoiseSet_i))
		\,\stackrel{(\ref{eq:distributive})}{\Leftrightarrow}\\
		g(\vMaskedSet_{i}) = f(g(\vSensorValueSet_i),g(\vNoiseSet_i))
		\,\stackrel{(\ref{eq:identity}, \ref{eq:masking})}{\Leftrightarrow}\\
		g(\vMaskedSet_{i})
		\to g(\vSensorValueSet_i),
	\end{gathered}
\end{equation}
which means that each noise multiset $\vNoiseSet_i$ is generated in such a way that
the aggregation of $g(\vMaskedSet_i)$ approximates the
aggregation of $g(\vSensorValueSet_i)$. An illustrative example is
the laplace noise used in the literature for the aggregation
functions of count or summation \cite{Dwork2006,duan2009differential}, which
satisfies Relations \ref{eq:masking}, \ref{eq:recursive} and \ref{eq:identity}.
Now it can be proven that:
\begin{equation}\label{eq:proof}
	\begin{gathered}
		g(\bigcup_{i=1}^{\absol{I}}f_i(\vSensorValueSet_i,\vNoiseSet_i))
		\stackrel{\refeq{eq:recursive}}{=}
		g(\bigcup_{i=1}^{\absol{I}}g(f_i(\vSensorValueSet_i,\vNoiseSet_i)))\stackrel{\refeq{eq:break}}{\iff}\\
		g(\bigcup_{i=1}^{\absol{I}}\vMaskedSet_i) \to
		g(\bigcup_{i=1}^{\absol{I}}g(\vSensorValueSet_i)))\stackrel{\refeq{eq:subsets},
		\refeq{eq:recursive}}{\iff}\\
		g(\bigcup_{i=1}^{\absol{I}}\vMaskedSet_i) \to g(\vSensorValueSet)
	\end{gathered}
\end{equation}
Thus, Theorem \ref{thm:heterogeneous} is proven.
\end{proof}


The practical implication of Theorem \ref{thm:heterogeneous} is that the
aggregation of sensor values is approximated by the aggregation of masked values
produced by different privacy settings. The approximation stands as long as the
noise values produced by the different privacy settings satisfy Relations
\ref{eq:distributive} and \ref{eq:identity}. According to Relation
\ref{eq:subsets}, each subset of sensor values should be masked by one privacy
setting. \visibleComment{Regarding the complexity of these operations, applying
the masking on top of sensor values is linearly depended to the number of sensor
values \vmath{\absol{\vSensorValueSet{}_{i}}} assigned to each privacy setting.
Due to Relation \ref{eq:subsets}, applying the proposed framework in real time
increases computational complexity by \vmath{O(\absol{\vSensorValueSet})}.
The original values are not stored or transmitted at runtime, thus the storage
and communication complexity does not change. During optimization all the
privacy settings \vmath{i \in I} are applied to a training set of sensor values
\vmath{S}. In that case real sensor values are stored and transmitted as well
along with the masked values for each setting. The storage and communication
costs increase by \vmath{O(\absol{I}\cdot{}\absol{\vSensorValueSet{}})}. The
computation costs also increase to
\vmath{O(\absol{I}\cdot{}\absol{\vSensorValueSet{}})}, which is a quadratic
complexity in the worst case \vmath{\absol{I}=\absol{\vSensorValueSet{}}}. In
most real world applications, it is safe to assume that the sensor values have
considerably higher volume to the evaluated privacy settings
\vmath{\absol{I}<<\absol{\vSensorValueSet{}}}, thus the expected computational,
storage and communication complexity are linear to the number of sensor values.}

\visibleComment{The framework can be applied as a multi-agent system. It
requires two types of agents representing the \vDc{}s and
\vDp{}s. This scheme can be applied in both centralized and decentralized aggregation services, 
such as MySQL or DIAS \cite{7921015}.} Finally in
both heterogeneous and homogeneous systems, the \vDc{} can influence the
\vDp{}'s choice by offering a higher amount of reward to achieve a higher
utility.

 
\section{Experimental Settings}\label{sec:setting}
This section illustrates the experimental settings, which are used to
empirically evaluate the proposed framework. A set of sensor values
\vSensorValueSet{} is used for the evaluation. Each sensor value
\vSensorValueIndexed{} belongs to a user \vUser{} and is generated at time
\vTime{}.\vcomment{ Time \vTime{} may indicate a single time point or a time period of
several time points.} For each sensor value, a privacy setting that operates on
the device of the \vDp{} masks the sensor value
\vPrivacySetting{\vSensorValueIndexed{}} by using the masking mechanism
\vMaskingId with parameters \vMaskingParametersIndexed. 
Two metrics are used to evaluate privacy and utility.

\subsection{Privacy evaluation}\label{sec:privacyEval}
The main metric, which is used to calculate privacy, is the difference of
the masked value and the original value, which is defined as the local error:

\begin{equation}\label{eq:local}
	\begin{gathered}
		\vLocal_{\vUser, \vTime} = \left|\dfrac{\vPrivacySetting{\vSensorValueIndexed{}} -
		\vSensorValueIndexed{}}{\vSensorValueIndexed{}}\right|
	\end{gathered}
\end{equation}

For a privacy setting to achieve a high privacy, a \vDc{} should not be able to
estimate the local error for the sensor values sent by \vDp{}s. This is achieved
by choosing privacy settings that generate noise that is difficult to estimate.
As it is shown in the literature \cite{Pournaras2016, Krause2008, Dwork2014,
Aggarwal2008}, the noise is difficult to estimate, if it is highly random and
causes a significant change in the original value.
\visibleComment{To avoid noise fitering attacks, noise with low or no
autocorrelation is generated. The range of autocorrelation values can be
determined analytically when the noise generation function is defined. In case
this is not possible, a metric quantifying the color of noise can be
included in the objective function.} Randomness is evaluated by measuring
the Shannon entropy \cite{Shannon1948} \vEntropy{\vLocalSet} of the local error
for all local error values \vLocalSet{}. The entropy
is calculated by creating a histogram of the error values and then applying the
discrete Shannon entropy calculation. Each bin of the histogram has a size of
0.001. The significance of change is measured by calculating the mean local
error \vMean{\vLocalSet} and standard deviation \vStandardDeviation{\vLocalSet}.
When comparing privacy settings, higher mean,
variance and entropy indicate higher privacy \cite{Aggarwal2008}. In this
article, the objective function that measures privacy for a privacy setting
\vmath{f_{\vMaskingId, \vParamId}} is defined as
follows\footnote{\label{fn:error}The error function described in
\refeq{eq:local} and \refeq{eq:global} is also known in literature as absolute
percentage error (APE)~\cite{MAKRIDAKIS1993527}. The error values are easy to
interpret, as APE measures the relative change of the sensor values and
aggregate values by using masking. Yet, when the denominator of the function is
approaching zero, then the absolute relative error cannot be calculated. If the
sensor values are sparse, then another error function can be used, such as
MAPE.}:
\begin{equation}\label{eq:privacy}
	\begin{gathered}
		\vPrivacy =
		\alpha_{1}\dfrac{\vMean{\vLocalSet_{\vMaskingId,\vParamId}}}{\vMax{\vMean{\vLocalSet_{\vMaskingId,\vParamId}}}}
		+
		\alpha_{2}\dfrac{\vStandardDeviation{\vLocalSet_{\vMaskingId,\vParamId}}}{\vMax{\vStandardDeviation{\vLocalSet_{\vMaskingId,\vParamId}}}}
		\\
		+\,
		\alpha_{3}\dfrac{\vEntropy{\vLocalSet_{\vMaskingId,
		\vParamId}}}{\vMax{\vEntropy{\vLocalSet_{\vMaskingId,\vParamId}}}}
	\end{gathered}
\end{equation}
Where \vmath{\alpha_{1}, \alpha_{2}, \alpha_{3}} are weighting parameters used
to control the effect of each metric in the privacy objective function.
\vmath{\vMax{\bullet}} is the maximum observed value for a metric during the
experiments. This value is produced by evaluating all privacy settings
\vmath{f_{\vMaskingId,\vParamId}}. Dividing by this value, normalizes the
metrics in $\left[0,1\right]$, so that the objective function is not affected by
the scale of the metric.

\subsection{Utility evaluation}\label{sec:utilEval}
The utility of the system is estimated by measuring the error the
system accumulates within a time period, by computing an aggregation function
\vAggregate{\bullet} on the masked sensor values. Examples of such
aggregation functions are the daily total, daily average and weekly variance of
the sensor values. The accumulated error is referred to as global
error\textsuperscript[\ref{fn:error}] and is defined as:

\begin{equation}\label{eq:global}
	\begin{gathered}
		\vGlobal_{\vTime} = \left|\dfrac{\vAggregate{\vMaskedSet_{\vTime}} -
		\vSensorValueSet_{\vTime}}{\vAggregate{\vSensorValueSet_{\vTime}}}\right|
	\end{gathered}
\end{equation}

A sample set of global error values \vmath{\vGlobalSet{}} is created by
applying the masking process for a number of time periods of the dataset.
\visibleComment{The mean, entropy and variance of the global error of a privacy
setting \vmath{f_{\vMaskingId,\vParamId}} is calculated over this sample.
The mean global error \vMean{\vGlobalSet_{\vMaskingId,\vParamId}} indicates the
expected error between the masked and actual aggregate. The standard deviation
\vStandardDeviation{\vGlobalSet_{\vMaskingId,\vParamId}} and the entropy
\vEntropy{\vGlobalSet_{\vMaskingId,\vParamId}} of the global error, indicate how
much and how often the masked aggregate diverges from the expected value.
Minimizing all three quantities to 0, ensures that the masked aggregate
approximates the actual aggregate efficiently.} Thus, after the global
error sample is created for each privacy setting, the corresponding utility
objective function is calculated:

\begin{equation}
\label{eq:utility}
	\begin{gathered}
		\vUtility\!=\!1\!-\!\left(
		\gamma_{1}\dfrac{\vMean{\vGlobalSet_{\vMaskingId,\vParamId}}}{\vMax{\vMean{\vGlobalSet_{\vMaskingId,\vParamId}}}}
		\!+\!
		\gamma_{2}\dfrac{\vStandardDeviation{\vGlobalSet_{\vMaskingId,\vParamId}}}{\vMax{\vStandardDeviation{\vGlobalSet_{\vMaskingId,\vParamId}}}}\right.\\\left.
		+\,
		\gamma_{3}\dfrac{\vEntropy{\vGlobalSet_{\vMaskingId,\vParamId}}}{\vEntropy{\vMax{\vGlobalSet_{\vMaskingId,\vParamId}}}}
		\vphantom{\int_1^2}
		\right)
	\end{gathered}
\end{equation}
Where the weighting parameters \vmath{\gamma_{1}, \gamma_{2}, \gamma_{3}} are
used to control the effect of each metric in the utility objective function.
\vmath{\vMax{\bullet}} is the maximum observed value for a metric during the
experiments. This value is produced by evaluating all privacy settings
\vmath{f_{\vMaskingId,\vParamId}}. Dividing by this value, normalizes the
metrics in $\left[0,1\right]$, so that the objective function is not affected by
the scale of the metric.

Recall from \refsec{sec:framework} that utility and privacy vary, when
repeating the masking process for the same privacy setting and dataset due to the randomness
of the noise. A large sample to measure this variance is created, by
applying each privacy setting over three times on the same dataset. Then
the framework of \refsec{sec:framework} filters the privacy settings
based on the mode and the scale of the privacy-utility sample, as illustrated
in \reffig{fig:optim:c}.
The privacy-utility samples for a privacy setting may not follow a symmetrical or
normal distribution\footnote{It is confirmed in some experimental settings
that some privacy settings generate samples of privacy-utility values that do
not pass a Kolmogorov Smirnoff normality test \cite{Rao1975}, and are also
non-symmetrical.}. As a result, the maximization of the following objective
function is based on utility:
\begin{equation}
\label{eq:utilmax}
	\begin{gathered}
			\text{perc}\left(\vUtilitySet,
			50\right) + \text{perc}\left(\vUtilitySet, 10\right)
	\end{gathered}
\end{equation}
Where \vmath{perc(U, i)} calculates the \vmath{i^{th}} percentile of a set of utility values \vmath{U} produced by the application of a
privacy setting.

The factors that maximize Relation (\ref{eq:utilmax}) are: (i) the value of the
mode, which is assumed to be approximated by the median and (ii) the dispersion
towards values lower than the median, which is expressed by adding the
\vmath{10^{th}} percentile to the median. The objective function evaluates the
median and the negative dispersion (\vmath{10^{th}} percentile) of utility
values.
Positive dispersion is not taken into account in the optimization, since the
abstract objective of the optimization is to ensure the least expected utility
of a privacy setting for the \vDc{}s. The privacy is constrained by evaluating
only privacy settings in which the
\vmath{10^{th}} percentile differs from the privacy median for at most
\vmath{\omega}, as shown in Inequality (\ref{eq:ptivcon}). The value of
\vmath{\omega} is constrained to be lower or equal to the bin size of the
optimization to ensure low privacy dispersion:
\begin{equation}
\label{eq:ptivcon}
	\begin{gathered}
			\text{perc}\left(\vPrivacySet,
			50\right) - \text{perc}\left(\vPrivacySet, 10\right) < \omega,
	\end{gathered}
\end{equation}
Where \vmath{perc(Q, i)} calculates the \vmath{i^{th}} percentile of a set
a set of privacy values \vmath{Q} produced by the application of a privacy
setting.

 
\section{Experimental Evaluation}\label{sec:eval}
The proposed framework is evaluated experimentally by applying it to a
real-world dataset. Privacy and utility are evaluated using over
\vmath{20,000} privacy settings for empirical evaluation.

\subsection{Electricity Customer Behavior Trial dataset} 
The Electricity Customer Behavior Trial (ECBT) dataset contains sensor data
that measure the energy consumption for \vmath{6,435} energy \vDp{}s.
The data are sampled every \vmath{30} minutes daily for \vmath{536} days. For
the proposed framework, a set of sensor values \vSensorValueSet of
\vmath{\absol{\vUserSet}=6,435} users and \vmath{|T| = 536} time periods. The
total number of sensor values in the set is
\vmath{\absol{\vSensorValueSet}=165,559,680}. The sensor data are considered
private and the utility company managing the energy network uses them to
calculate daily total consumption in the grid, to predict possible failures
and plan power production. The daily total consumption is an aggregation that
can be defined as the sum of all the sensor values generated during the day:
\vmath{\vAggregate{\vSensorValueSet_{\vTime}}=\sum_{\vUser=1}^{6435}\vSensorValueIndexed}.
Around \vmath{10\%} of the daily measurements are missing values, and are not
included in the experiments. The significance of the missing values reduces as
the aggregation interval increases. Therefore, a daily summation is chosen
over more granular summation.

During the experiments, the local error of Relation \refeq{eq:local} results in
a non-finite\footnote{The original sensor value is zero, therefore the result of
Relation \refeq{eq:local} is infinite for non-zero noise or indefinite for
zero-noise.} number only for a low number of maskings.
Hence, these values are excluded from the experiments, so that the calculation
of finite local error values is feasible. Concluding, the proposed framework
operates on \vmath{90\%} of the ECBT dataset.

\subsection{Privacy mechanisms}\label{sec:mechs}
Among several masking mechanisms~\cite{Aggarwal2008}, two ones are used for the
evaluation of the framework. Each mechanism is parameterized using the grid search algorithm\footnote{Also known
as parameter sweep.}~\cite{Lerman1980}.
The majority of masking mechanisms are parameterized with real numerical values.
A grid search discretizes these values, and then evaluates exhaustively all
possible combinations of parameter values.
\subsubsection{Laplace masking mechanism} 
This mechanism is widely used in literature \cite{Dwork2014, Aggarwal2008,
Duan2009}. The noise in the experiments of this  paper is generated by sampling a
laplace distribution with zero mean. The scale parameter \vmath{b} of the
distribution is selected to ensure maximum privacy. Part of privacy can be sacrificed to
increase utility if the privacy requirements from the \vDp{}s are not high. In this
masking mechanism, this is achieved by reducing the \vmath{b}. The scale
parameter for each laplace masking setting, is generated from value
\vmath{b=0.001} and during the parameter sweep the value increases by
\vmath{0.001} until it reaches \vmath{b=10}.

\subsubsection{Sine polyonym masking mechanism}\label{sec:sinePolyonym}
This mechanism is introduced in this paper. The mechanism generates random
noise that can be added to each sensor value. Assume a uniform random variable
\vUniformVar{}.
The noise generated from the introduced masking mechanism is calculated as
follows:
\begin{equation}
	\begin{gathered}
		\vMaskedValue{} = 
		\sum_{\xi=0}^{\absol{\Xi}}[\theta_{\xi}\sin(2\pi\vUniformVar)]^{2\xi+1}
	\end{gathered}
\end{equation}
The coefficients of the polyonym are denoted as \vmath{\theta_{\xi}}, and
\vmath{\xi} denotes the index of the coefficient. Both the length of the
polyonym \absol{\Xi} and the individual coefficient values can be tuned
to optimize the resulting privacy-utility values of the masking mechanism.
The generated noise is symmetrically distributed around zero, because the odd
power of the sine function produces both negative and positive noise with equal
probability. The sine function and its odd powers are always symmetrical towards
the horizontal axis, meaning that
\absol{\vmath{[\theta_{\xi}\sin(2\pi\vUniformVar)]^{2\xi+1}}} =
\absol{[-\theta_{\xi}\sin(2\pi\vUniformVar)]^{2\xi+1}} . Hence, the
integral of each factor is zero
\vmath{\int_{0}^{1.0}[\theta_{\xi}\sin(2\pi\vUniformVar)]^{2\xi+1}d\vUniformVar}
= 0. Therefore the distribution of generated values is symmetrical around zero
for \vmath{\vUniformVar \in \left[0,1\right]}, which denotes that the global
error mean is approximating zero.  \visibleComment{Increasing the length of the
polyonym and the values of its coefficients, increases the magnitude of the
local error, without affecting the global error, indicating that
higher utility can be achieved without sacrificing privacy. These properties
make polyonyms of trigonometric functions, such as sine and cosine, elligible
canditates for additive noise optimization.}
\vcomment{The majority of the additive noise mechanisms are based on symmetric
distributions tuned via a single scale variable.
In the proposed masking mechanism, the higher the number of polyonym factors the
more the parameters to tune the noise. Each factor of the polyonym adds a noise
that is scaled, based on the corresponding power and the coefficient value.}
By increasing the polyonym length and tuning the coefficient values, a larger
space of privacy settings is searched to maximize privacy and
utility.\\\indent
Each coefficient is assigned to a value in the space
\vmath{[0.01,1.8]}.
The grid search in that space starts with a step of \vmath{0.03} until the value of 0.3,
to evaluate settings that create low noise. Then the step changes to 0.3 until
the value of 1.8, to evaluate privacy settings that generate higher values of
noise. The sine polyonym masking settings are generated by creating all possible
permutations of these values for 5 coefficients. This yields around 10,000
masking settings. \visibleComment{Preliminary analysis
on the autocorrelation and the spectrograms of the proposed sine polyonym noise
does not show autocorrelation and recurring patterns over
different spectrograms\footnote{\visibleComment{Further analysis
on this, is possible future work and is out of the scope of this article. This can be
evaluated by introducing a metric that measures noise color in the privacy
function.}}.}

\subsection{Error analysis}\label{sec:errorAnalysis}
Each privacy setting that results from parameterization of the mechanisms is
evaluated by analyzing the local and the global error that they generate on
varying subset sizes of the ECBT dataset. By sampling varying sizes of the
dataset, the utility and privacy dispersion metrics are evaluated on a varying
number of sensor values, measuring the effect of varying participation
in the system. To create a random subset of the ECBT dataset, a
subset of users \vmath{\vUserSet{}_\text{test}} is chosen. In each repetition the users are
chosen randomly. All users use a universal privacy setting. The initial size
of the subset is 50 users, and then it increases by 50 users until \vmath{\absol{\vUserSet{}_\text{test}} = 500} users.
Then, the size of the subset increases by 500 users until
\vmath{\absol{\vUserSet{}_\text{test}} = 6,435}. This process generates several
local and global error values. The average, standard deviation and entropy of
the local error and global error are calculated for all samples generated
from the above process. The empirical cumulative distribution
function\footnote{The cumulative distribution function denotes the probability of a generated value
being lower or equal than the corresponding domain axis
value~\cite{Spiegel1992}.} (CDF) is shown for each metric in
\reffig{fig:stats:error}.

The sine polyonym mechanism can produce a wider range of local and global error
values compared to the laplace mechanism, since almost every sine polyonym CDF
curve is covering a wider domain range on the domain axis compared to the
respective laplace CDF curves.
The majority of the range axis values of the sine polyonym CDF curve are higher
than the corresponding range values of the laplace CDF curve. This indicates
that it is more probable to generate lower global or local error value by using a sine
polyonym setting compared to a laplace setting. Concluding, the sine polyonym
settings are expected to produce a wider range of privacy-utility trade-offs. Based on
the CDF charts, sine polyonym settings are more likely to achieve higher
utility, whereas laplace settings are expected to achieve higher privacy.

\begin{figure}[ht!]
  \subfloat[Local error mean]{%
       \includegraphics[width=0.45\linewidth]{./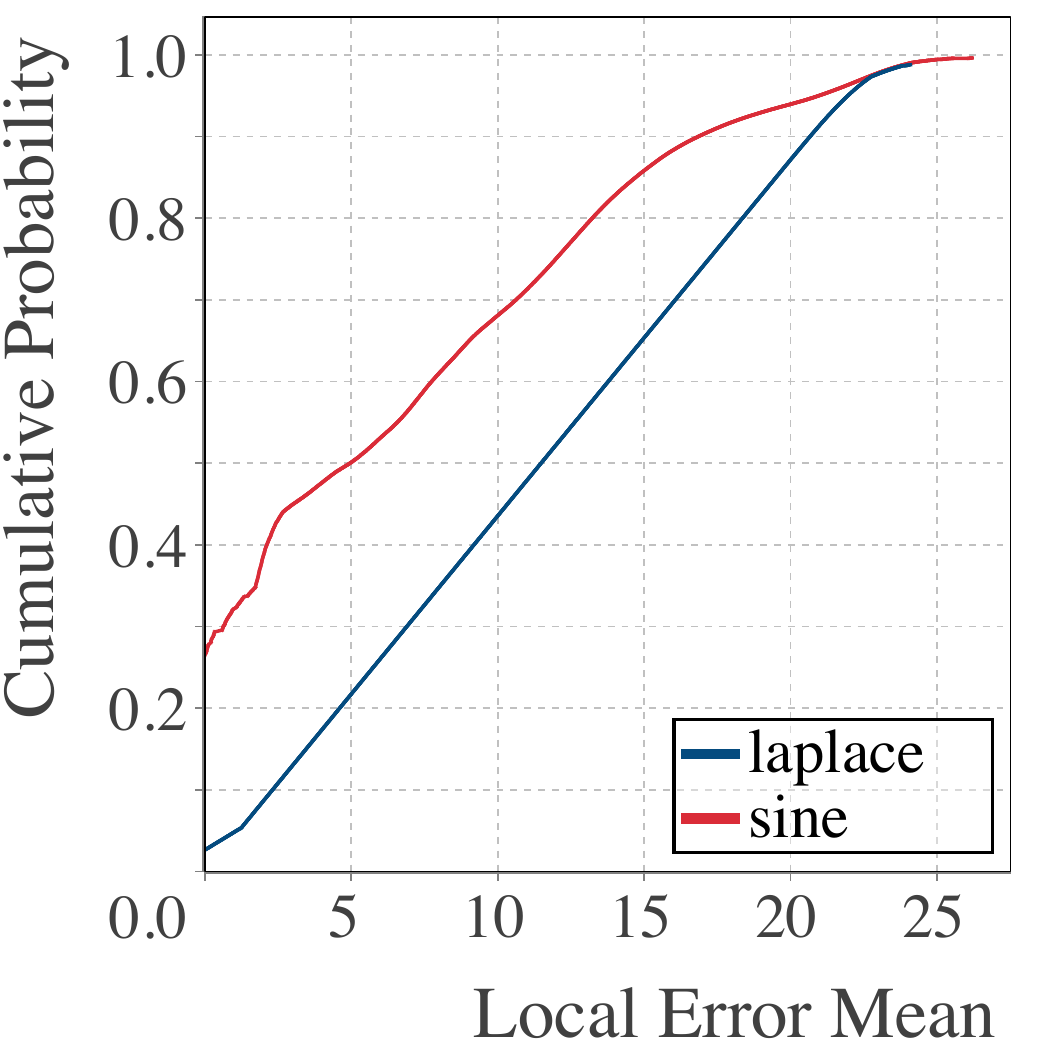}
       \label{fig:cdf:a}}\hfill
  \subfloat[Global error mean]{%
        \includegraphics[width=0.45\linewidth]{./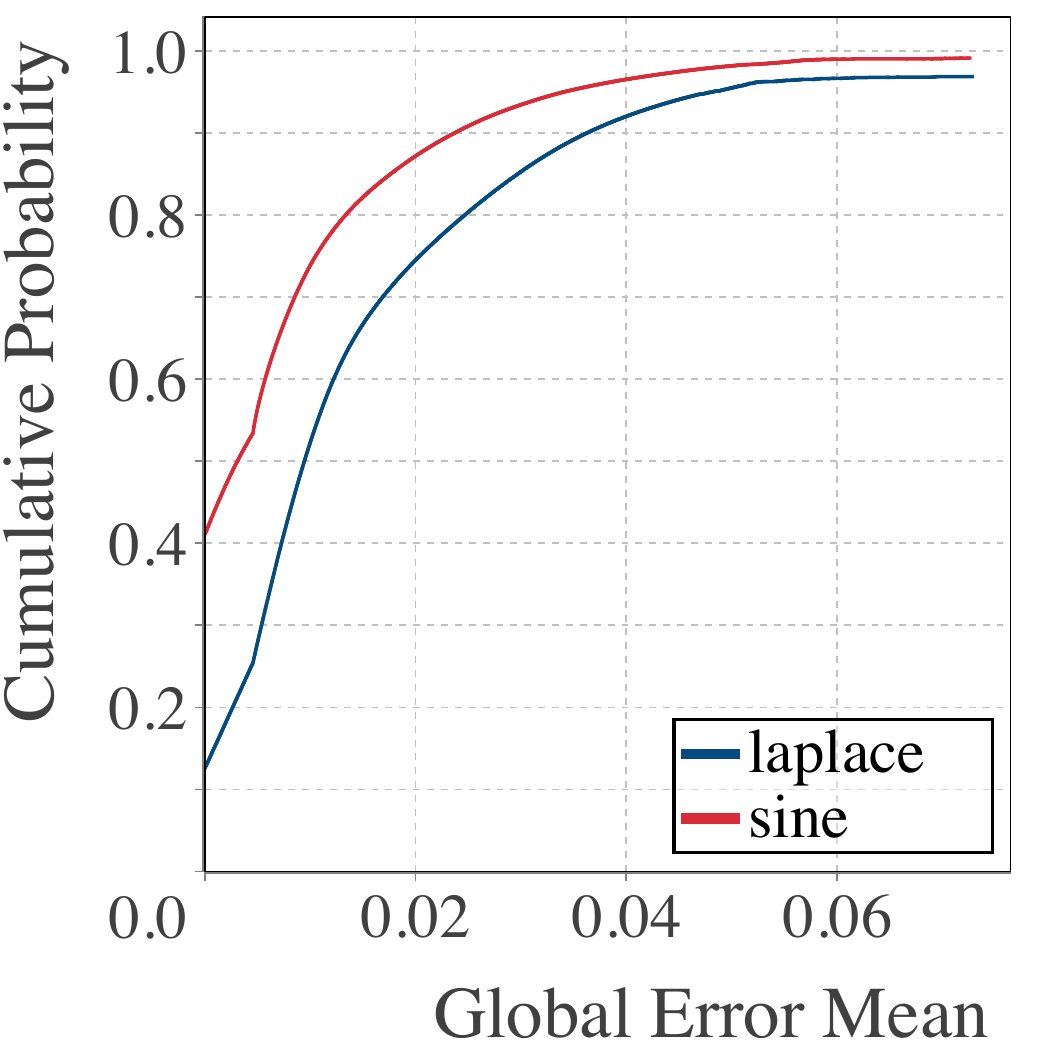}
        \label{fig:cdf:b}}\\
   \subfloat[Standard deviation of local error]{%
       \includegraphics[width=0.45\linewidth]{./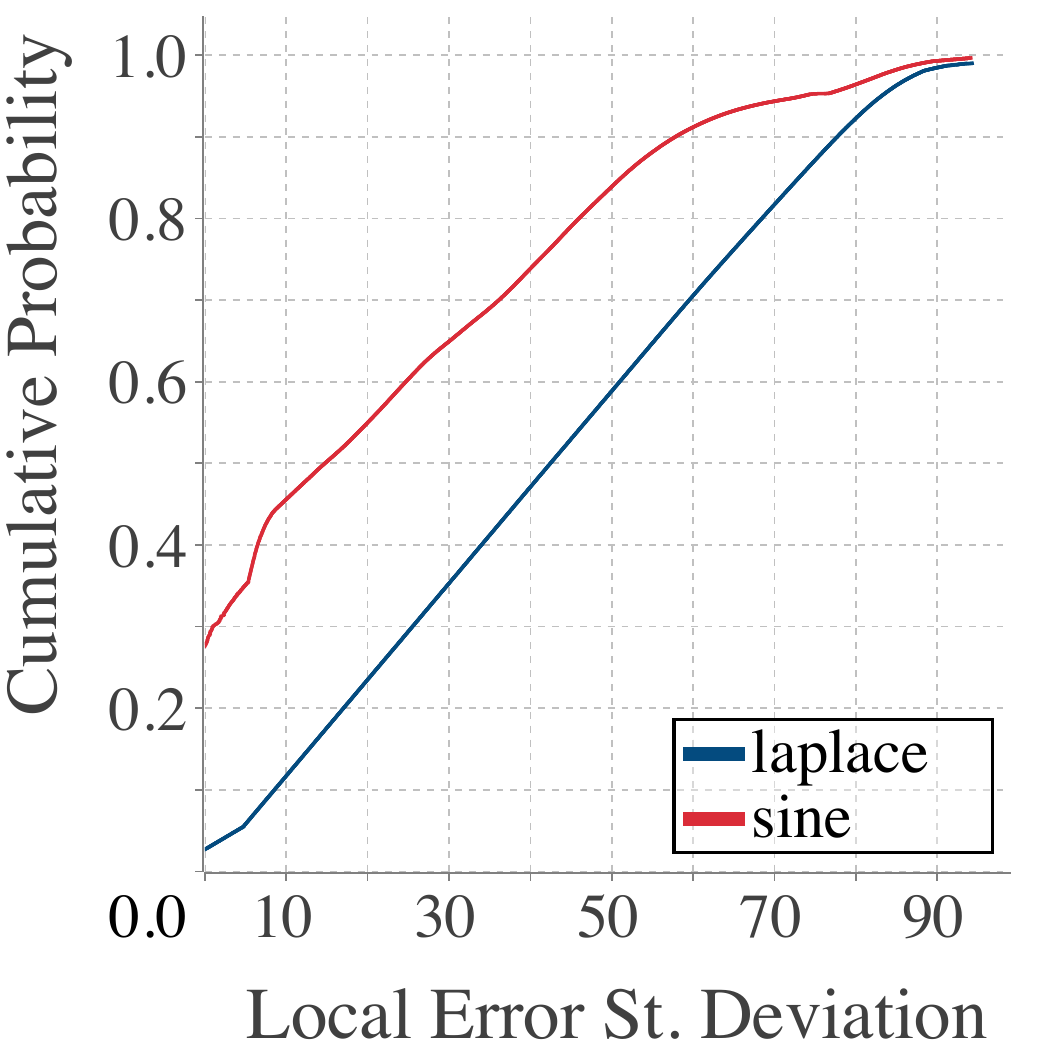}
       \label{fig:cdf:c}}\hfill
  \subfloat[Standard deviation of global error]{%
        \includegraphics[width=0.45\linewidth]{./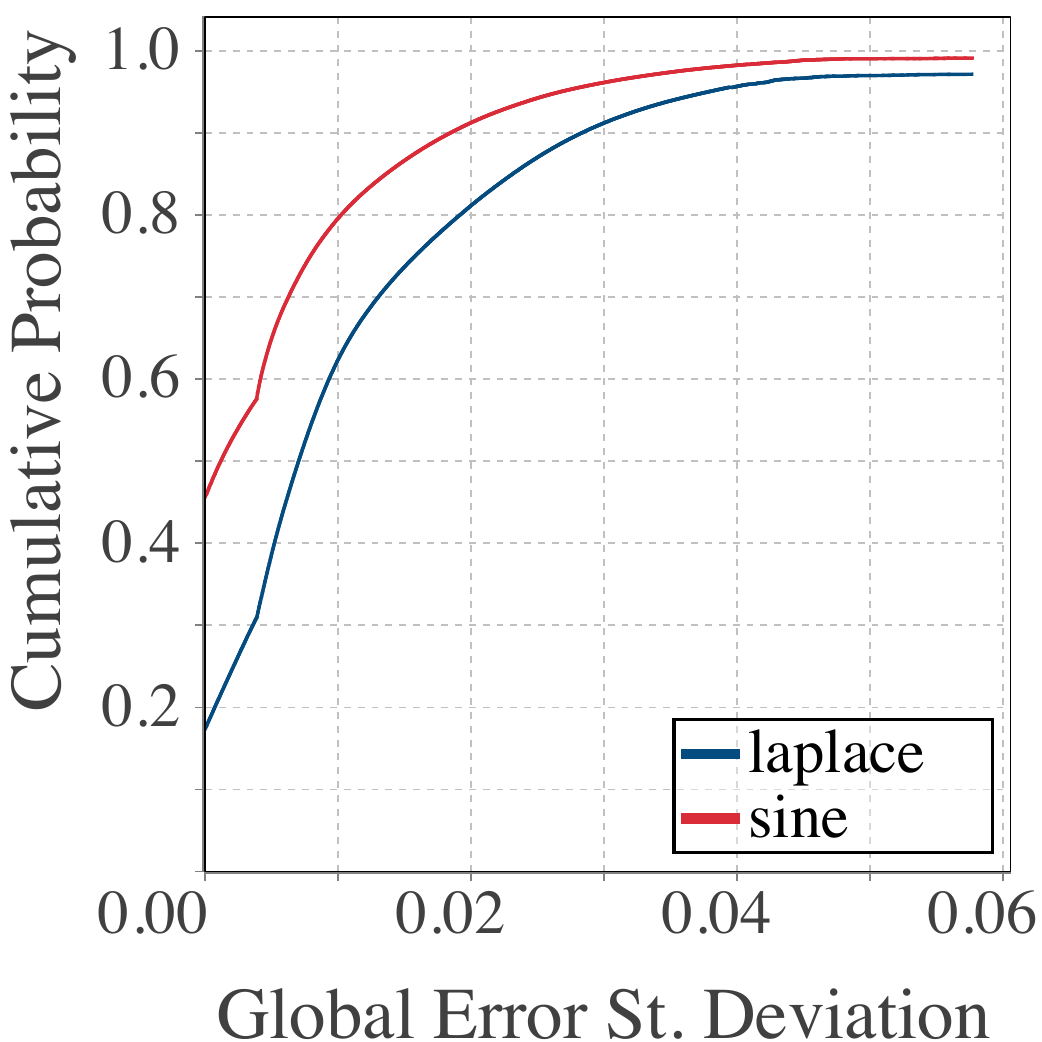}
        \label{fig:cdf:d}}\\
  \subfloat[Local error entropy]{%
       \includegraphics[width=0.45\linewidth]{./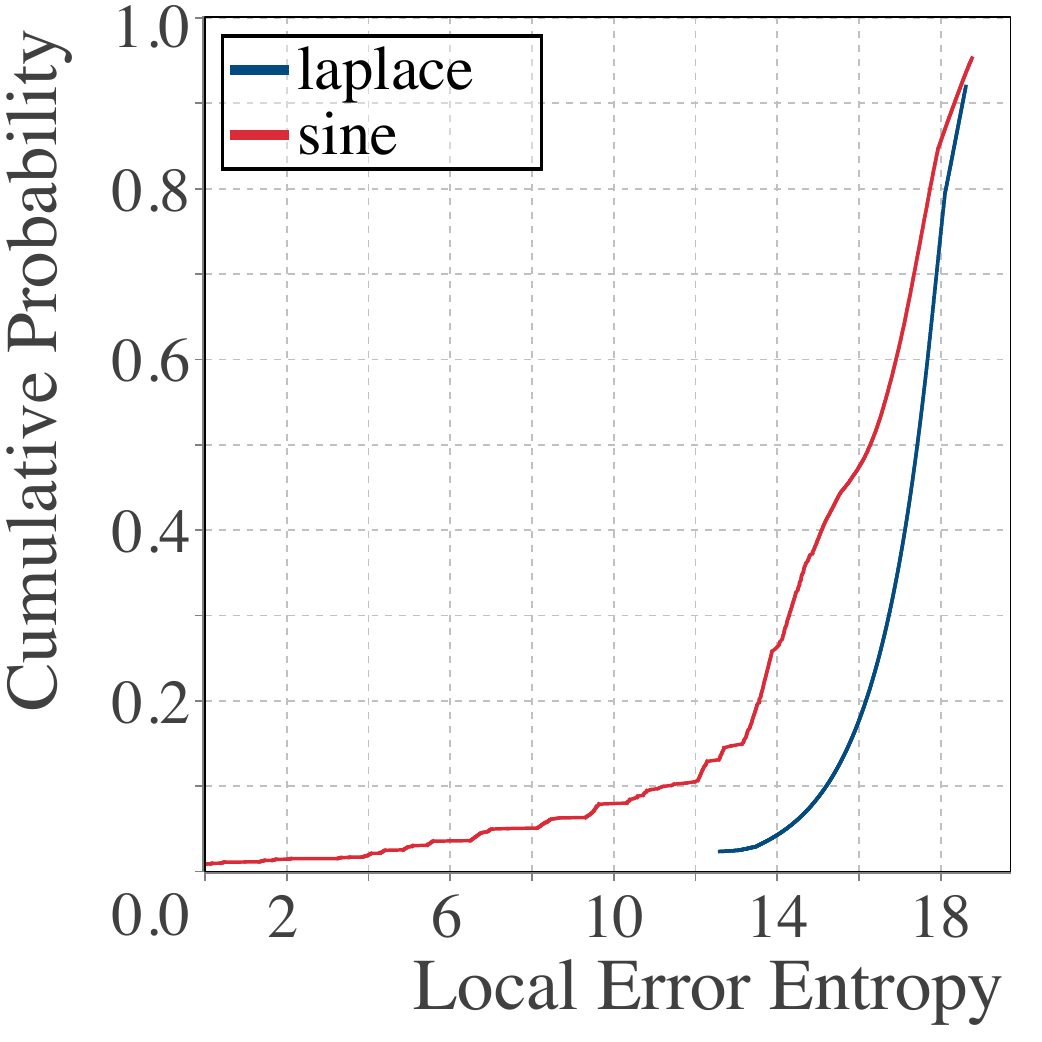}
       \label{fig:cdf:e}}\hfill
  \subfloat[Global error entropy]{%
        \includegraphics[width=0.45\linewidth]{./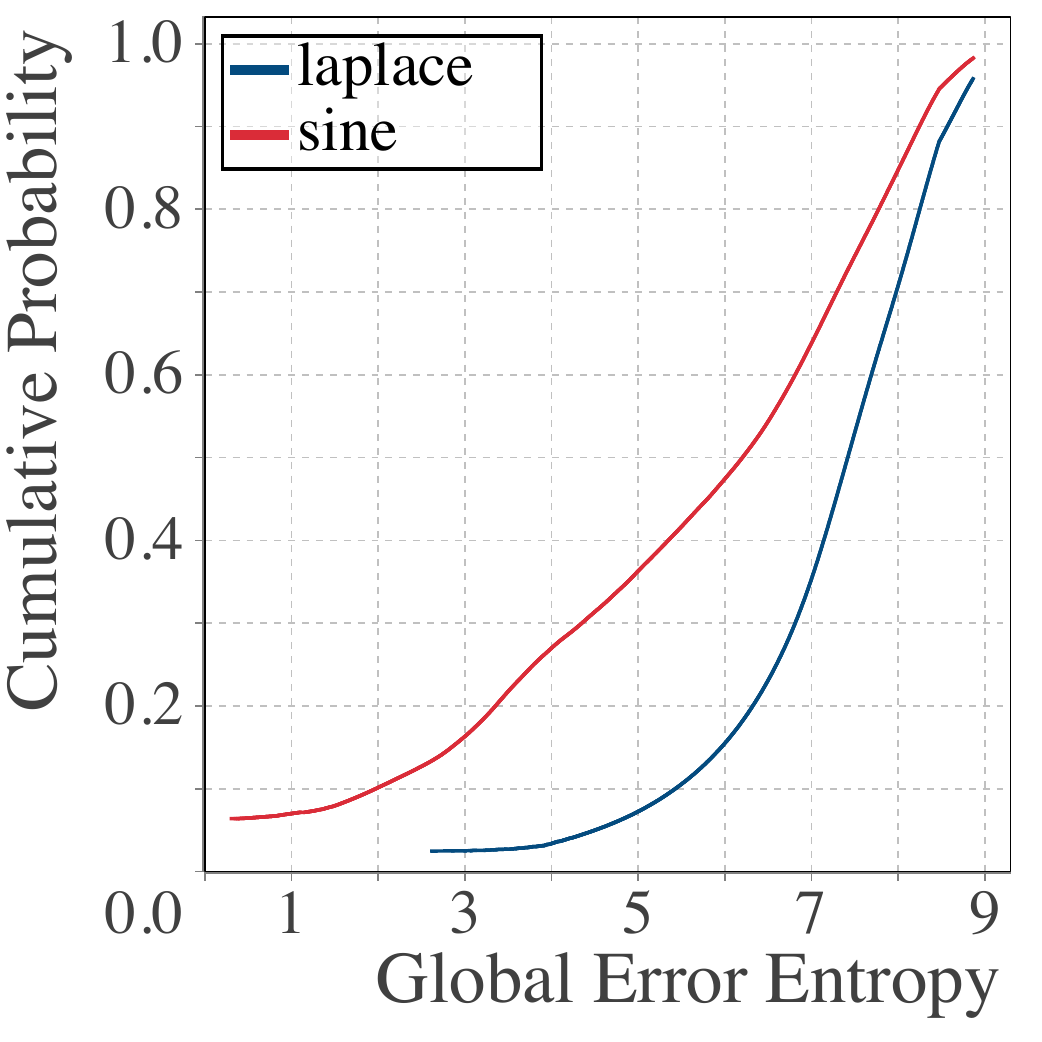}
        \label{fig:cdf:f}}\\
  \caption{Cumulative distribution function of each
  local and global error metric computed by all settings of each masking
  mechanism.
  }
  \label{fig:stats:error}
  \vspace{-0.27in}

\end{figure}

\subsection{Parameter analysis}
For the experiments, \vmath{\alpha} and \vmath{\gamma} parameters are
defined to calculate the privacy and utility. The choice of these parameters may
vary based on the distribution of the sensor values and the kind of
aggregation.
Also \vDp{}s and \vDc{}s may have varying requirements that affect the choice of
those values. In this paper, these values are determined empirically, to showcase an
empirical evaluation. If a \vDc{} successfully calculates the local error mean
by acquiring the corresponding original values of a masked set, then it is
possible to estimate the original sensor values of other masked sets as well, by
subtracting the calculated mean.
This challenge is addressed by using privacy settings with high noise variance.
Still, high variance does not guarantee that the masking process is not irreversible.
If noise varies between a small finite number of real values, then the \vDc{}
can also estimate the original value of the data by subtracting the variance. To overcome this challenge, privacy settings that produce noise with
high entropy, therefore high randomness, are chosen.
Consequently, a lower value for the coefficient of local error mean is chosen
as \vmath{\alpha_1 = 0.2}, while entropy and standard deviation of the local error
share a higher coefficient value of \vmath{\alpha_2 = \alpha_3 = 0.4}.

Assigning values to the utility coefficients depends highly on the
preferences of the \vDc{}. In the case of sum, the global error mean should be near 0, unless
the \vDc{} estimates the mean and then subtracts it from the aggregation
result. For this paper the main concern is to keep a global error mean near
zero, to avoid the aforementioned correction process. Standard deviation and
entropy are assigned with equal weight. Therefore, a very high
coefficient of \vmath {\gamma_1 = 0.6} for the global error mean is chosen,
whereas the coefficients of \vmath{\gamma_2=\gamma_3=0.2} for global error,
standard deviation and entropy are chosen. To avoid 
evaluating mechanisms with high utility dispersion and low utility mode
values, a hard constraint is applied and only mechanisms that generate mean
\vmath{\vMean{\vGlobalSet} < 0.1} and standard deviation values
\vmath{\vStandardDeviation{\vGlobalSet} < 0.1} are evaluated.
The normalizing factors of Relations (\ref{eq:utility}) and (\ref{eq:privacy})
are chosen after the application of this constraint.

A sensitivity analysis of the parameters for each masking mechanism is
performed to evaluate the effect of different parameter values on the privacy
and utility output of each masking mechanism. In the laplace masking mechanism,
increasing the scale parameter \vmath{b} of the distribution, also increases
the total noise added to the dataset. In the sine mechanism, increasing the number
and values of the coefficients, also increases the total generated noise. In
\reffig{fig:comparison}, a comparison of privacy and
utility is shown between the two types of mechanisms. The values of utility and
privacy are generated as shown in \refsec{sec:errorAnalysis}. The total
noise is generated by measuring the noise level of each privacy setting on a sample of
100,000 sensor values\footnote{This sample size is chosen to be large enough for statistical
significance and small enough to reduce computation costs.}.
The lines are smoothed by applying a moving average, to make the comparison
clearer. For the same amount of total absolute generated noise
\vmath{\sum_t\absol{\vNoiseValue_t}}, the laplace privacy settings achieve
higher privacy, often more than \vmath{1\%} over the sine polyonym privacy
settings. The sine polyonym privacy settings achieve higher utility around
\vmath{1\%} over the laplace privacy settings. Therefore the results illustrated
in \reffig{fig:stats:error} are reflected in the privacy and utility values
generated from the above parameterization. Moreover, the trade-off between
privacy and utility is observable, as privacy increases with the decrease of
utility and vice versa for both mechanisms.

\begin{figure}[htp]
\begin{center}
  \includegraphics[width=0.7\linewidth]{./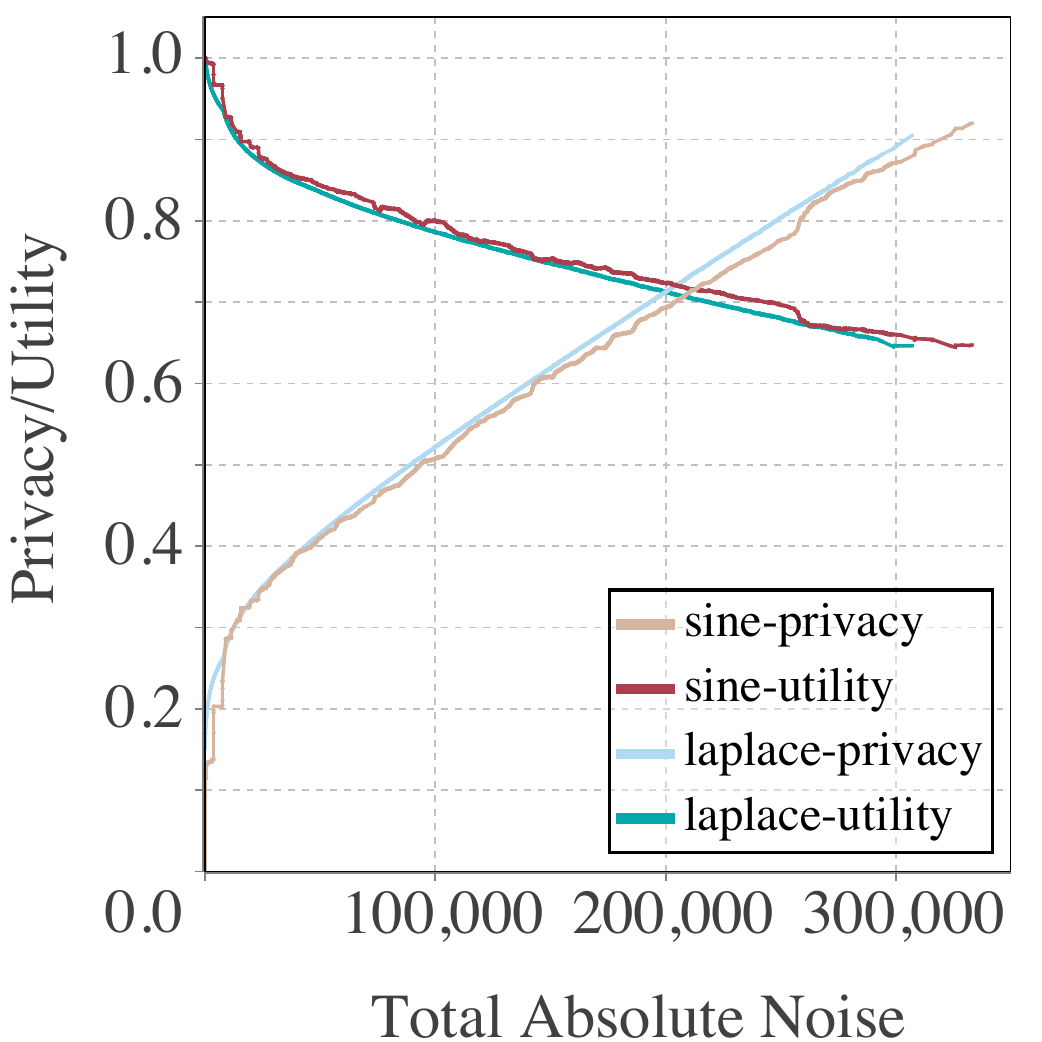}
  \caption[labelInTOC]{Comparison of sine polyonym and laplace masking
  mechanisms in terms of privacy and utility.}
  \label{fig:comparison}
\end{center}
\vspace{-0.27in}
\end{figure}

\subsection{Homogeneous system evaluation}\label{sec:hom}
All the generated privacy settings are evaluated via the framework proposed in
\refsec{sec:framework}. The proposed framework filters out five privacy
settings for five privacy bins of size 0.2. The constraint value for evaluating
privacy settings is chosen empirically to be half of the bin size \vmath{\omega
= 0.1}, to ensure low privacy dispersion, based on Relation
(\ref{eq:ptivcon}). The resulting privacy settings are summarized in
\textit{Table} \ref{tab:results}. The last two columns of the table, illustrate
the median privacy and utility values for each masking mechanism. The first
column shows the id of each setting, which is used as reference in Figures
\textit{\ref{fig:optim} \& \ref{fig:heat}}.

\begin{table}
	\caption{A table summarizing the performance of the five optimal privacy
	settings based on the parameters of the sine polyonym denoting the coefficient
	value for each factor of the polyonym or the scale value for a laplace
	mechanism. In case of the sine polyonym, the first number from right is mapped
	to the first factor (\vmath{\xi=1}) and so on.}\label{tab:results}
	\footnotesize
	\begin{tabular}{ c| l| l| c| c }
	\multicolumn{1}{c}{ID} &
	\multicolumn{1}{c}{Masking} & \multicolumn{1}{c}{
	Parameters} & \multicolumn{1}{c}{Privacy} &
	\multicolumn{1}{c}{Utility} \\ \hline
	A & cosine & 0.0-0.0-0.0-0.18-0.0	& 0.01 &	0.99\\
	B & laplace & 0.005					& 0.20	& 0.98\\
	C & cosine & 0.6-0.6-0.0-0.9-0.3	& 0.40	& 0.84\\
	D & cosine & 1.2-0.3-0.6-1.2-0.9	& 0.60	& 0.76\\
	E & cosine & 1.5-1.5-1.2-0.3-1.2	& 0.80 &	0.68\\
	N & none &	-	&	0.00	&	1.00\\

	\end{tabular}
	  \vspace{-0.27in}
\end{table}

\reffig{fig:optim:all} shows the generated privacy-utility values for all the
privacy settings tested. Each color is mapped to the masking mechanism that is
used to produce this setting. The line denotes the median value of utility at
the given privacy value. The non-median privacy-utility values occur in the
semi-transparent area. Upper and lower edges of the area denote the minimum
and maximum utility value for the corresponding privacy value. Lower utility
values for a given privacy point are generated from applications of the privacy
setting on small subsets of the ECBT datasets, where \vmath{\absol{\vUserSet}
\leq 1000}. The number of sensor values decreases with the number of users.
\vcomment{Consequently, the number of noise values decreases as well.} Therefore, the noise
cancellation is also reduced, as mentioned in \refsec{sec:per}. Hence, subsets
with a lower number of sensor values produce lower utility values. The trade-off
between privacy and utility is quantified, since the median
curve and the edges of the surrounding area indicate a decrease in utility with
the increase of privacy.
In \reffig{fig:optim:selected}, the area of privacy-utility values of 5 privacy
settings produced by the optimization process is shown in
\refsec{sec:utilEval}. Furthermore the ``no masking'' privacy setting is also
considered, where users choose to use no privacy setting and send the values
unmasked.

As it is shown, the privacy values of each privacy setting are within a range
of lower than 0.2 privacy. The dispersion of utility increases for privacy
settings that achieve higher privacy. The importance of offering more rewards for the usage of
higher utility mechanisms is validated, since high dispersion of utility is
restrictive for accurate sum calculations by the \vDc{}s. Figures
\ref{fig:optim:all:lim} and \ref{fig:optim:selected:lim} illustrate the
privacy-utility trajectories for more than 1,000
users. It is evident that a \vDc{} can also increase utility and reduce its
dispersion by attracting more users.
Higher rewards in general, can also attract more users, so the utility
dispersion is expected to decrease even more.

\begin{figure}[ht!]
  \subfloat[Trajectory for all user set sizes]{%
       \includegraphics[width=0.48\linewidth]{./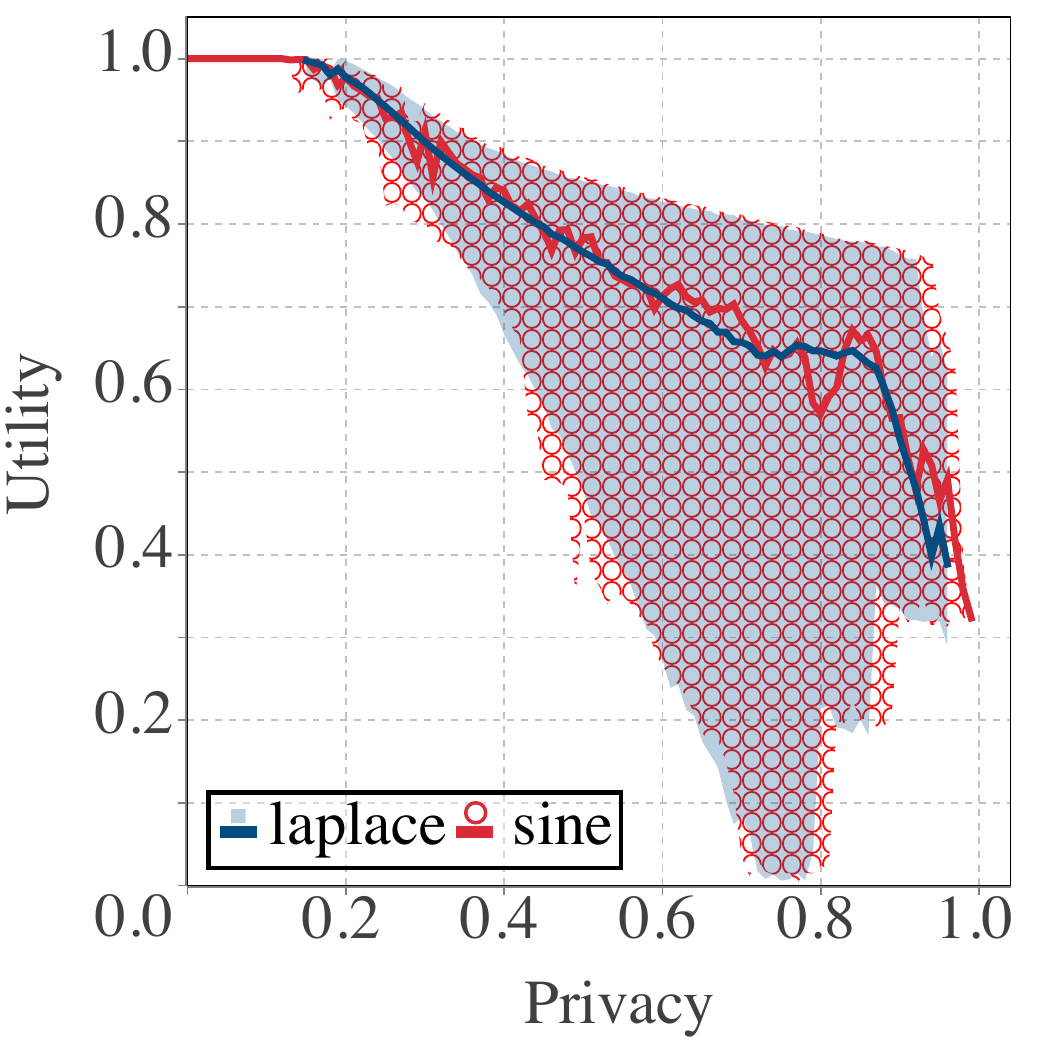}
       \label{fig:optim:all}}\hfill
  \subfloat[Optimization results for all user set sizes]{%
        \includegraphics[width=0.48\linewidth]{./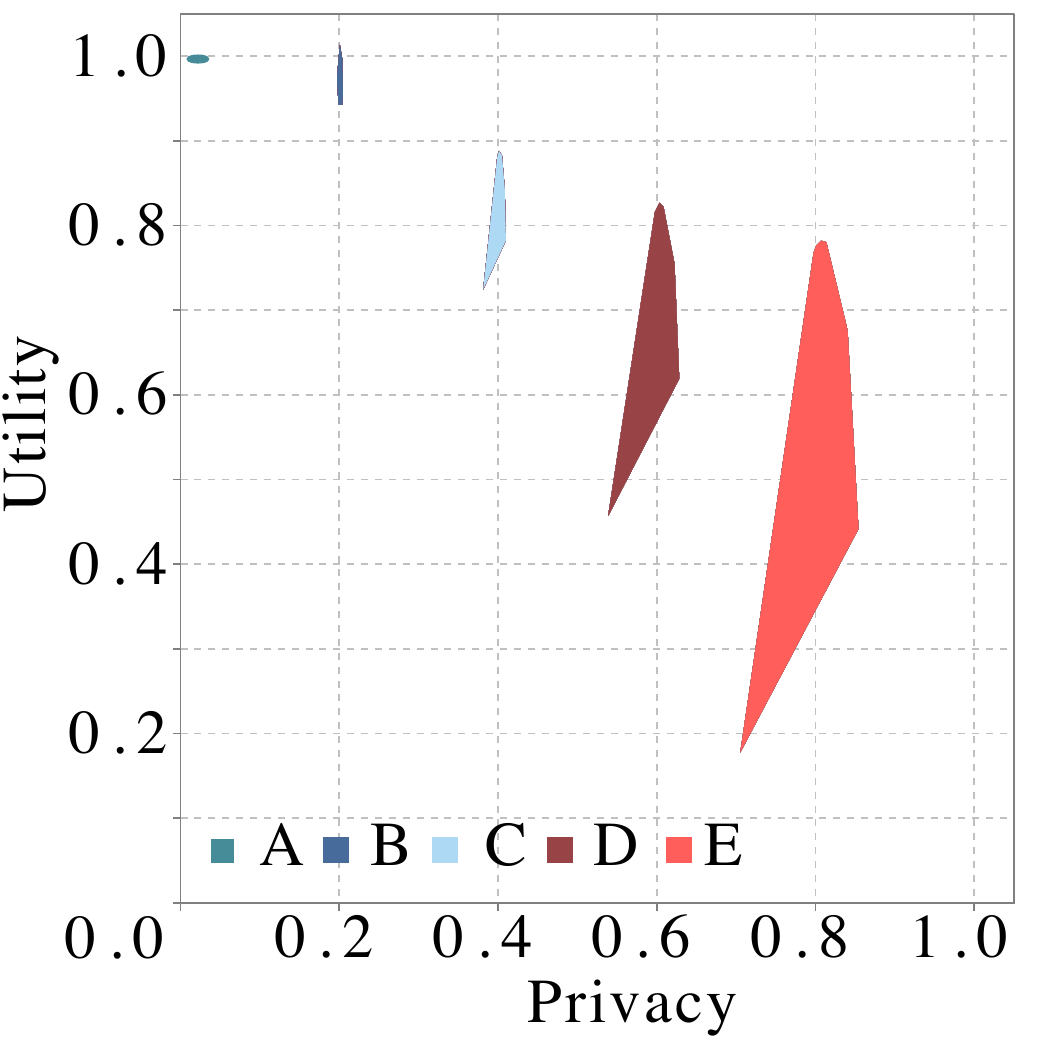}
        \label{fig:optim:selected}}\hfill\\
   \subfloat[Trajectory for all user sets with more than 1000 users]{%
       \includegraphics[width=0.48\linewidth]{./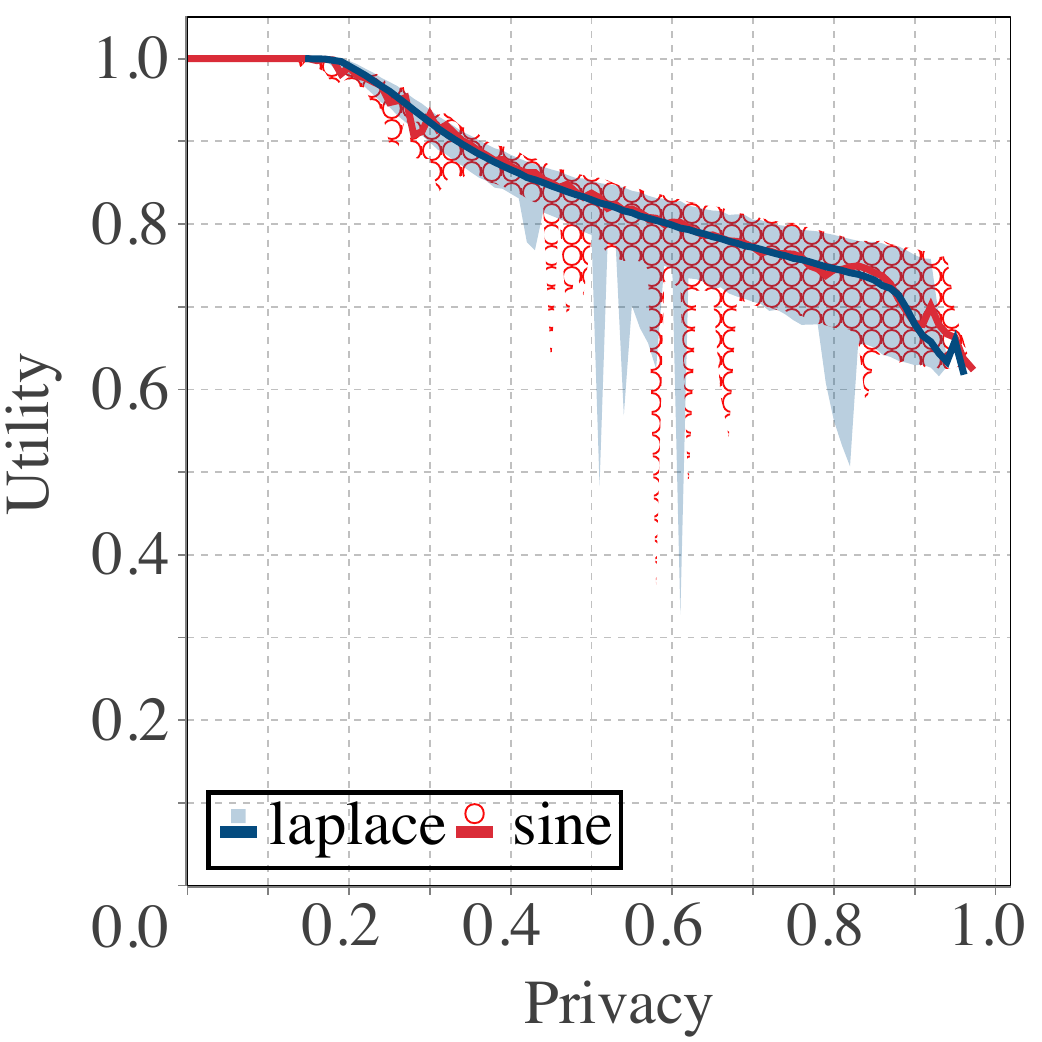}
       \label{fig:optim:all:lim}}\hfill
  \subfloat[Optimization results for user sets with more than 1000 users]{%
        \includegraphics[width=0.48\linewidth]{./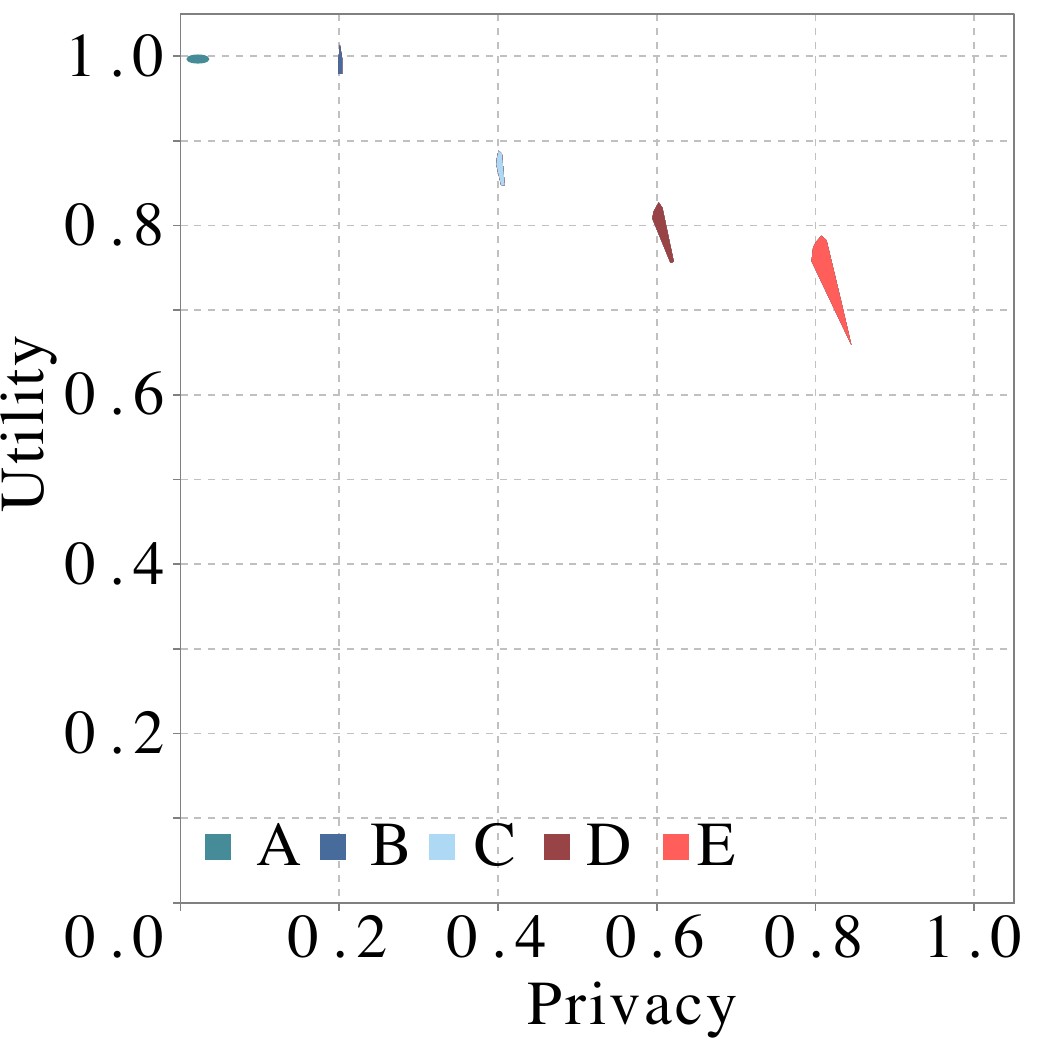}
        \label{fig:optim:selected:lim}}\hfill\\
  \caption{Figures \ref{fig:optim:all} \& \ref{fig:optim:all:lim} show the
  privacy-utility trajectory of the privacy settings grouped by masking
  mechanisms in the same color. Figures \ref{fig:optim:selected} \&
  \ref{fig:optim:selected:lim} illustrate the trajectories of the privacy
  settings, which are generated by the proposed framework.}
  \label{fig:optim} 	
  \vspace{-0.2in}
\end{figure}

\subsection{Heterogeneous system evaluation}
In an heterogeneous system, the framework performance is evaluated under the use of
different privacy settings from each user. The difference of privacy and utility
between homogeneous and heterogeneous systems is quantified.
This quantification is done by performing an exhaustive simulation.
The simulation combines the ECBT dataset and the six privacy settings in
\textit{Table} \ref{tab:results}. Every user of the ECBT dataset is assigned a
privacy setting from \textit{Table} \ref{tab:results}. The percentage of users
that are assigned each privacy setting is parameterized A histogram with six
bins is created. Each bin corresponds to the ID of a specific privacy setting
from Table \ref{tab:results}. The percentage assigned to a bin denotes the
percentage of users using the respective privacy setting at this time point.
To generate several possible scenarios for different distributions of user
choices, the histogram is parameterized via a parameter sweep of all possible
percentage values for each setting, with a step of \vmath{12.5\%}. This process
produces over 1000 possible histograms. In figures \ref{fig:heat:priv:center} -
\ref{fig:heat:ut:spread} the heatmaps show the median and the interquantile
range (IQR)\footnote{IQR is considered a robust measure of scale, which is
especially used for non-symmetric distributions. It measures the range between
the \vmath{25^{th}} and the \vmath{75^{th}} quantiles.} of privacy and utility
for all histograms that the privacy setting has a higher percentage of users
compared to the others. Such a setting is referred to as dominant setting. This
sorting of settings is done to examine the privacy-utility changes while users
move from a higher to the next lower utility setting. The top row of the heatmap
shows the homogeneous scenario case, where 100\% of the users chose only one
setting.

The analysis of the heatmap in \reffig{fig:heat:priv:center} shows an increase
in privacy when the majority of users choose the more privacy-preserving
settings of the homogeneous scenario. This effect is observed for any percentage
of users for a dominant setting. A decrease in utility median is confirmed in
\ref{fig:heat:ut:center}, when the majority of users shifts from less private to
more private settings. The trade-off between privacy and utility is preserved in
the heterogeneous scenario, regardless of the percentage of users that choose
the dominant setting. Privacy values disperse more in heterogeneous systems,
according to \reffig{fig:heat:priv:spread}, as the percentage assigned to the
dominant setting drops. The dispersion of privacy can reach up to 0.16, which is
still lower than the bin range. In terms of utility, the dispersion is much
lower on average. There is a dispersion of around 0.1
for high utility mechanisms when they are dominant with 87.5\% of users. A
possible explanation for this is the reduction of noise cancellation of high
privacy settings, due to the low percentage of users choosing them.
Concluding, changing from a homogeneous system to a heterogeneous system
preserves the trade-off between privacy and utility in the median values. Furthermore,
the change to a heterogeneous system increases the dispersion of privacy-utility
values for all the mechanisms, so the \vDc{} should expect the aggregates to be less accurate.
Still, utility remains over 0.76 even if the IQR is subtracted from the median,
indicating that the aggregate is still approximated even in the heterogeneous
case. This validates empirically Theorem \ref{thm:heterogeneous}. In both
cases it is efficient for the \vDc{} to shift user privacy choices to high
utility mechanisms by offering them higher rewards. The randomness of the
generated noise in an heterogeneous system does not create high variance or high expected
global errors. Individual privacy is still preserved for
all users and their privacy settings. The individual privacy value does not
change between heterogeneous and homogeneous systems, since the privacy-setting
choice of one user does not affect the added noise to the sensor values of the
other ones.

\begin{figure}[ht!]
  \subfloat[Privacy Median]{%
       \includegraphics[width=0.48\linewidth]{./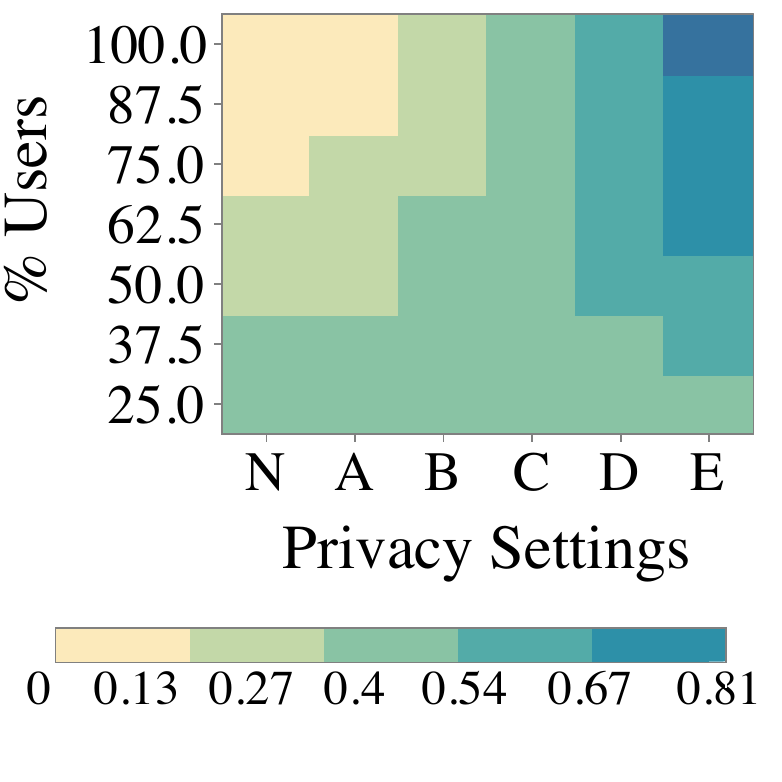}
       \label{fig:heat:priv:center}}\hfill
  \subfloat[Privacy IQR]{%
        \includegraphics[width=0.48\linewidth]{./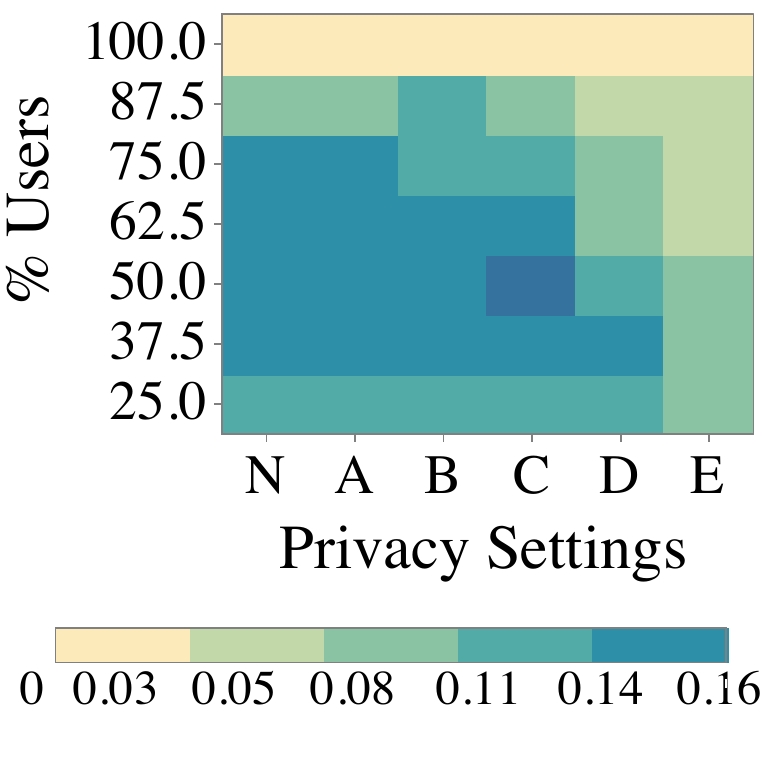}
        \label{fig:heat:priv:spread}}\hfill\\
   \subfloat[Utility Median]{%
       \includegraphics[width=0.48\linewidth]{./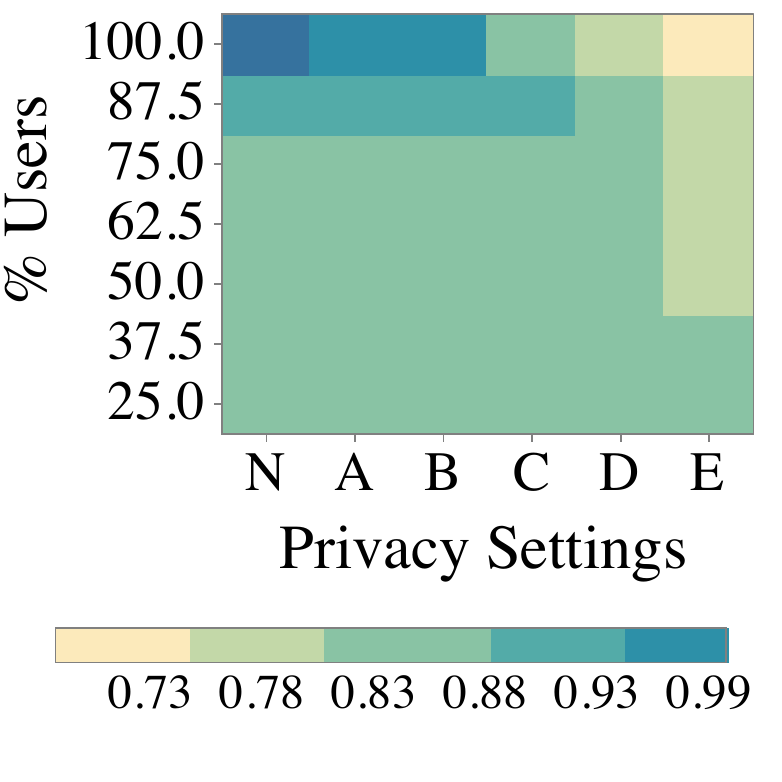}
       \label{fig:heat:ut:center}}\hfill
  \subfloat[Utility IQR]{%
        \includegraphics[width=0.48\linewidth]{./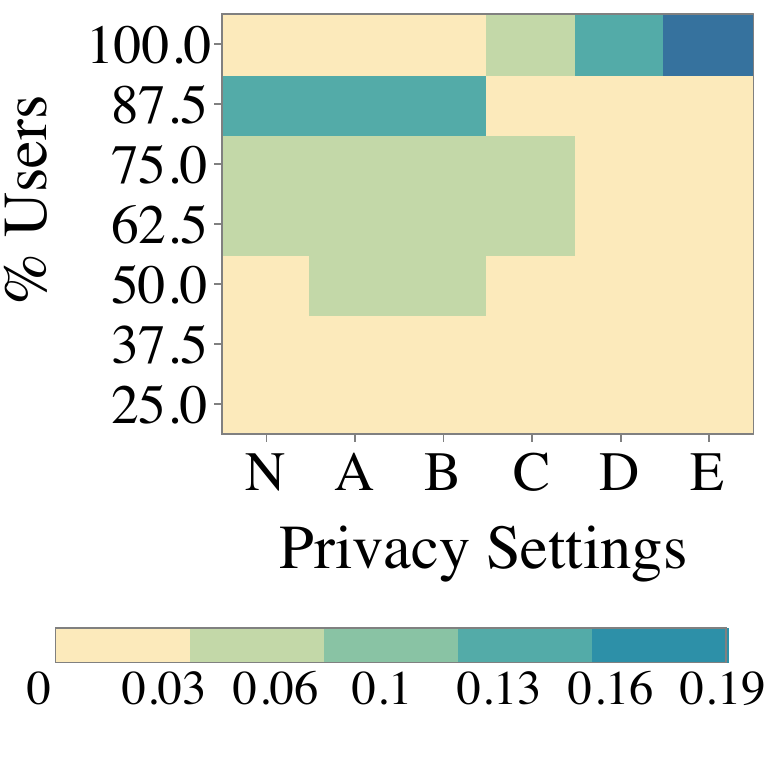}
        \label{fig:heat:ut:spread}}\hfill\\
  \caption{The heatmaps in Figures
  \ref{fig:heat:priv:center}-\ref{fig:heat:ut:spread} show the
  privacy and utility median and IQR values, for various distributions of
  privacy setting choices among the users.}
  \label{fig:heat} 	
    \vspace{-0.27in}
\end{figure}


\section{Conclusion and Future Work}\label{sec:conclusion}

An optimization framework for the selection of privacy settings is introduced in
this paper. The framework computes privacy settings that maximize utility for
different values of privacy. This framework can be utilized in
privacy-preserving systems that calculate aggregation functions over
privatized sensor data. The \vDp{}s of such system can self-determine the
privacy setting of their choice, since it is guaranteed that it produces the
desired privacy with very low deviation. For the \vDc{} of the system, it is
guaranteed that if the \vDp{}s are incentivized to use low-privacy settings and
high utility settings, the approximated aggregate is highly accurate. Analytical
as well as empirical evaluation using over \vmath{20,000} privacy settings and
real-world data from a Smart Grid pilot project confirm the viability of
participatory data sharing under informational self-determination.

For future work, the proposed framework can be improved by incorporating a
machine learning process that computes personalized recommendations of privacy
settings to each \vDp{}, by identifying the prior distribution of the sensor
data and also the preferences of the \vDp{}.  Further empirical
evaluations of framework can be performed by implementing other aggregation
functions and using different datasets.  \visibleComment{Finally, an analytical
proof that the sine polyonym additive noise is not colored and differentially
private can be performed.}

\section{Acknowledgments}
This work is supported by European Community’s H2020 Program under the scheme
'ICT-10-2015 RIA', grant agreement \#688364 
‘ASSET: Instant Gratification for Collective Awareness and Sustainable
Consumerism’\footnote{\url{http://www.asset-consumerism.eu/}}.




\section{References}

\begin{thebibliography}{}

\bibitem[Aggarwal and Yu, 2008]{Aggarwal2008}
Aggarwal, C.~C. and Yu, P.~S., editors (2008).
\newblock {\em {Privacy-Preserving Data Mining}}, volume~34 of {\em Advances in
  Database Systems}.
\newblock Springer US, Boston, MA.

\bibitem[Beliakov et~al., 2007]{Beliakov2007AggregationFA}
Beliakov, G., Pradera, A., and Calvo, T. (2007).
\newblock Aggregation functions: A guide for practitioners.
\newblock In {\em Studies in Fuzziness and Soft Computing}.

\bibitem[Bennati and Pournaras, 2018]{BennatiP17}
Bennati, S. and Pournaras, E. (2018).
\newblock Privacy-enhancing aggregation of internet of things data via sensors
  grouping.
\newblock {\em Sustainable Cities and Society}, 39:387 -- 400.

\bibitem[Bhatia and Breaux, 2015]{Bhatia2015}
Bhatia, J. and Breaux, T.~D. (2015).
\newblock Towards an information type lexicon for privacy policies.
\newblock In {\em 2015 IEEE Eighth International Workshop on Requirements
  Engineering and Law (RELAW)}, pages 19--24.

\bibitem[Book, 1974]{BOOK1974213}
Book, R.~V. (1974).
\newblock Comparing complexity classes.
\newblock {\em Journal of Computer and System Sciences}, 9(2):213 -- 229.

\bibitem[Burns and Bush, 2014]{Burns2014}
Burns, A.~C. and Bush, R.~F. (2014).
\newblock {\em {Marketing research}}.
\newblock Pearson.

\bibitem[Carmichael et~al., 2016]{Carmichael2016}
Carmichael, L., Stalla-Bourdillon, S., and Staab, S. (2016).
\newblock {Data Mining and Automated Discrimination: A Mixed Legal/Technical
  Perspective}.
\newblock {\em IEEE Intelligent Systems}, 31(6):51--55.

\bibitem[Du and Atallah, 2001]{Du2001}
Du, W. and Atallah, M.~J. (2001).
\newblock {Secure multi-party computation problems and their applications}.
\newblock In {\em Proceedings of the 2001 workshop on New security paradigms -
  NSPW '01}, page~13, New York, New York, USA. ACM Press.

\bibitem[Duan, 2009]{duan2009differential}
Duan, Y. (2009).
\newblock Differential privacy for sum queries without external noise.
\newblock In {\em ACM Conference on Information and Knowledge Management
  (CIKM)}.

\bibitem[Duan and Yitao, 2009]{Duan2009}
Duan, Y. and Yitao (2009).
\newblock {Privacy without noise}.
\newblock In {\em Proceeding of the 18th ACM conference on Information and
  knowledge management - CIKM '09}, page 1517, New York, New York, USA. ACM
  Press.

\bibitem[Dwork, 2006]{Dwork2006}
Dwork, C. (2006).
\newblock {Differential privacy}.
\newblock {\em Proceedings of the 33rd International Colloquium on Automata,
  Languages and Programming}, pages 1--12.

\bibitem[Dwork, 2008]{Dwork2008}
Dwork, C. (2008).
\newblock {Differential Privacy: A Survey of Results}.
\newblock In {\em Theory and Applications of Models of Computation}, pages
  1--19. Springer Berlin Heidelberg, Berlin, Heidelberg.

\bibitem[Dwork and Roth, 2014]{Dwork2014}
Dwork, C. and Roth, A. (2014).
\newblock The algorithmic foundations of differential privacy.
\newblock {\em Found. Trends Theor. Comput. Sci.}, 9:211--407.

\bibitem[Dwork et~al., 2017]{Dwork2017}
Dwork, C., Smith, A., Steinke, T., and Ullman, J. (2017).
\newblock Exposed! a survey of attacks on private data.
\newblock {\em Annual Review of Statistics and Its Application}, 4(1):61--84.

\bibitem[Gentry, 2009a]{gentryThesis}
Gentry, C. (2009a).
\newblock {\em A fully homomorphic encryption scheme}.
\newblock PhD thesis, Stanford University.
\newblock \url{crypto.stanford.edu/craig}.

\bibitem[Gentry, 2009b]{Gentry:2009:FHE:1536414.1536440}
Gentry, C. (2009b).
\newblock Fully homomorphic encryption using ideal lattices.
\newblock In {\em Proceedings of the Forty-first Annual ACM Symposium on Theory
  of Computing}, STOC '09, pages 169--178, New York, NY, USA. ACM.

\bibitem[Gentry, 2009c]{Gentry2009}
Gentry, C. (2009c).
\newblock {Fully homomorphic encryption using ideal lattices}.
\newblock {\em Proceedings of the 41st annual ACM symposium on Symposium on
  theory of computing STOC 09}, 19(September):169.

\bibitem[Gentry, 2010]{Gentry2010}
Gentry, C. (2010).
\newblock Computing arbitrary functions of encrypted data.
\newblock {\em Commun. ACM}, 53(3):97--105.

\bibitem[Gentry and Halevi, 2011]{Gentry2011}
Gentry, C. and Halevi, S. (2011).
\newblock {\em Implementing Gentry's Fully-Homomorphic Encryption Scheme},
  pages 129--148.
\newblock Springer Berlin Heidelberg, Berlin, Heidelberg.

\bibitem[Gibson et~al., 1991]{Gibson1991}
Gibson, J.~D., Koo, B., and Gray, S.~D. (1991).
\newblock Filtering of colored noise for speech enhancement and coding.
\newblock {\em IEEE Transactions on Signal Processing}, 39(8):1732--1742.

\bibitem[Gkika and Lekakos, 2014]{Lekakos2014}
Gkika, S. and Lekakos, G. (2014).
\newblock Investigating the effectiveness of persuasion strategies on
  recommender systems.
\newblock In {\em 2014 9th International Workshop on Semantic and Social Media
  Adaptation and Personalization}, pages 94--97.

\bibitem[Helbing and Pournaras, 2015]{Helbing2015}
Helbing, D. and Pournaras, E. (2015).
\newblock {Build digital democracy}.
\newblock {\em Nature}, 527(7576):33--34.

\bibitem[Kairouz et~al., 2017]{Kairouz2017}
Kairouz, P., Oh, S., and Viswanath, P. (2017).
\newblock The composition theorem for differential privacy.
\newblock {\em IEEE Transactions on Information Theory}, 63(6):4037--4049.

\bibitem[Kitchin, 2014]{kitchin2014data}
Kitchin, R. (2014).
\newblock {\em The Data Revolution: Big Data, Open Data, Data Infrastructures
  and Their Consequences}.
\newblock SAGE Publications.

\bibitem[Krause and Horvitz, 2008]{Krause2008}
Krause, A. and Horvitz, E. (2008).
\newblock A utility-theoretic approach to privacy and personalization.
\newblock In {\em Proceedings of the 23rd National Conference on Artificial
  Intelligence - Volume 2}, AAAI'08, pages 1181--1188. AAAI Press.

\bibitem[Krutz and Vines, 2010]{Krutz2010}
Krutz, R.~L. and Vines, R.~D. (2010).
\newblock {\em Cloud Security: A Comprehensive Guide to Secure Cloud
  Computing}.
\newblock Wiley Publishing.

\bibitem[Kursawe et~al., 2011]{Kursawe2011}
Kursawe, K., Danezis, G., and Kohlweiss, M. (2011).
\newblock {\em Privacy-Friendly Aggregation for the Smart-Grid}, pages
  175--191.
\newblock Springer Berlin Heidelberg, Berlin, Heidelberg.

\bibitem[Lerman, 1980]{Lerman1980}
Lerman, P.~M. (1980).
\newblock {Fitting Segmented Regression Models by Grid Search}.
\newblock {\em Applied Statistics}, 29(1):77.

\bibitem[Li et~al., 2014]{Li2014}
Li, C., Li, D.~Y., Miklau, G., and Suciu, D. (2014).
\newblock {A Theory of Pricing Private Data}.
\newblock {\em ACM Transactions on Database Systems}, 39(4):1--28.

\bibitem[Li and Li, 2009]{Li2009}
Li, T. and Li, N. (2009).
\newblock {On the tradeoff between privacy and utility in data publishing}.
\newblock In {\em Proceedings of the 15th ACM SIGKDD international conference
  on Knowledge discovery and data mining - KDD '09}, page 517, New York, New
  York, USA. ACM Press.

\bibitem[Makridakis, 1993]{MAKRIDAKIS1993527}
Makridakis, S. (1993).
\newblock Accuracy measures: theoretical and practical concerns.
\newblock {\em International Journal of Forecasting}, 9(4):527 -- 529.

\bibitem[McDonald and Cranor, 2008]{McDonald2008}
McDonald, A.~M. and Cranor, L.~F. (2008).
\newblock {The Cost of Reading Privacy Policies}.
\newblock {\em A Journal of Law and Policy for the Information Society},
  4(3):543--568.

\bibitem[Nin et~al., 2008]{Nin2008}
Nin, J., Herranz, J., and Torra, V. (2008).
\newblock {On the disclosure risk of multivariate microaggregation}.
\newblock {\em Data {\&} Knowledge Engineering}, 67(3):399--412.

\bibitem[Nurmi, 2012]{nurmi2012comparing}
Nurmi, H. (2012).
\newblock {\em Comparing Voting Systems}.
\newblock Theory and Decision Library A:. Springer Netherlands.

\bibitem[Phan et~al., 2016]{Phan2016}
Phan, N., Wang, Y., Wu, X., and Dou, D. (2016).
\newblock {Differential Privacy Preservation for Deep Auto-Encoders : An
  Application of Human Behavior Prediction}.
\newblock {\em Proceedings of the Thirtieth AAAI Conference on Artificial
  Intelligence (AAAI-16)}, pages 1309--1316.

\bibitem[Pournaras et~al., 2017]{7921015}
Pournaras, E., Nikolic, J., Omerzel, A., and Helbing, D. (2017).
\newblock Engineering democratization in internet of things data analytics.
\newblock In {\em 2017 IEEE 31st International Conference on Advanced
  Information Networking and Applications (AINA)}, pages 994--1003.

\bibitem[Pournaras et~al., 2016]{Pournaras2016}
Pournaras, E., Nikolic, J., Vel{\'a}squez, P., Trovati, M., Bessis, N., and
  Helbing, D. (2016).
\newblock Self-regulatory information sharing in participatory social sensing.
\newblock {\em EPJ Data Science}, 5(1):1--24.

\bibitem[Prabhakaran and Rosulek, 2008]{Prabhakaran2008}
Prabhakaran, M. and Rosulek, M. (2008).
\newblock {\em Cryptographic Complexity of Multi-Party Computation Problems:
  Classifications and Separations}, pages 262--279.
\newblock Springer Berlin Heidelberg, Berlin, Heidelberg.

\bibitem[Rao et~al., 1975]{Rao1975}
Rao, P.~V., Schuster, E.~F., and Littell, R.~C. (1975).
\newblock Estimation of shift and center of symmetry based on
  kolmogorov-smirnov statistics.
\newblock {\em The Annals of Statistics}, 3(4):862--873.

\bibitem[Samarati and Sweeney, 1998]{samarati-protect}
Samarati, P. and Sweeney, L. (1998).
\newblock Protecting privacy when disclosing information: k-anonymity and its
  enforcement through generalization and suppression.
\newblock Technical report, Computer Science Laboratory, SRI Internation.

\bibitem[S{\'{a}}nchez et~al., 2014]{Sanchez2014}
S{\'{a}}nchez, D., Domingo-Ferrer, J., and Mart{\'{i}}nez, S. (2014).
\newblock {Improving the Utility of Differential Privacy via Univariate
  Microaggregation}.
\newblock In {\em Lecture Notes in Computer Science}, pages 130--142. Springer,
  Cham.

\bibitem[Shannon, 1948]{Shannon1948}
Shannon, C.~E. (1948).
\newblock {A mathematical theory of communication}.
\newblock {\em The Bell System Technical Journal}, 27(July 1928):379--423.

\bibitem[Shokri and Shmatikov, 2015]{Shokri2015}
Shokri, R. and Shmatikov, V. (2015).
\newblock {Privacy-Preserving Deep Learning}.
\newblock In {\em Proceedings of the 22nd ACM SIGSAC Conference on Computer and
  Communications Security - CCS '15}, pages 1310--1321, New York, New York,
  USA. ACM Press.

\bibitem[Soria-Comas et~al., 2017]{Soria2017}
Soria-Comas, J., Domingo-Ferrer, J., Sánchez, D., and Megías, D. (2017).
\newblock Individual differential privacy: A utility-preserving formulation of
  differential privacy guarantees.
\newblock {\em IEEE Transactions on Information Forensics and Security},
  12(6):1418--1429.

\bibitem[Spiegel, 1992]{Spiegel1992}
Spiegel, M. (1992).
\newblock {\em {Schaum's Outline of Statistics}}.

\bibitem[Wang et~al., 2013]{Wang2013}
Wang, Z., Fan, K., Zhang, J., and Wang, L. (2013).
\newblock {Efficient Algorithm for Privately Releasing Smooth Queries}.

\bibitem[Wang et~al., 2016]{Wang2016}
Wang, Z., Jin, C., Fan, K., Zhang, J., Huang, J., Zhong, Y., and Wang, L.
  (2016).
\newblock {Differentially Private Data Releasing for Smooth Queries}.
\newblock {\em Journal of Machine Learning Research}, 17(51):1--42.

\bibitem[Yao, 1982]{Yao:1982:PSC:1398511.1382751}
Yao, A.~C. (1982).
\newblock Protocols for secure computations.
\newblock In {\em Proceedings of the 23rd Annual Symposium on Foundations of
  Computer Science}, SFCS '82, pages 160--164, Washington, DC, USA. IEEE
  Computer Society.

\end{thebibliography}



\begin{thenomenclature} 

 \nomgroup{A}

  \item [{$A$}]\begingroup A multiset of real values. Any capital letter is treated as a multiset of real values, unless stated otherwise.\nomeqref {19}
    \nompageref{16}

 \nomgroup{B}

  \item [{\vAggregate{A}}]\begingroup a function that aggregates all elements of a set \vmath{A} into a real value. e.g. for sum: \vmath{g_{sum}\left(A\right) = \sum_{i=0}^{\absol{A}}a_i}\nomeqref {19}
    \nompageref{16}
  \item [{$\vMean{A}$}]\begingroup The mean value of all elements of a set, where \vmath{a_{i} \in A}.\nomeqref {19}
    \nompageref{16}
  \item [{\vMedian{A}}]\begingroup The median value of all elements of a set, where \vmath{a_{i} \in A}. \nomeqref {19}
    \nompageref{16}
  \item [{\vMax{A}}]\begingroup The maximum value of all elements of a set, where \vmath{a_{i} \in A}. \nomeqref {19}
    \nompageref{16}
  \item [{\vMin{A}}]\begingroup The minimum value of all elements of a set, where \vmath{a_{i} \in A}. \nomeqref {19}
    \nompageref{16}
  \item [{\vEntropy{A}}]\begingroup The Shanon's entropy value for all elements of a set, where \vmath{a_{i} \in A}. \nomeqref {19}
    \nompageref{16}

 \nomgroup{C}

  \item [{\subopt{a}}]\begingroup A suboptimal value that approaches an optimal value, e.g. \vmath{\subopt{a}\,\to\,\vMax{A}} or \vmath{\subopt{a}\,\to\,\vMin{A}}. \nomeqref {19}
    \nompageref{16}
  \item [{\suboptrelax{a}}]\begingroup A new suboptimal value that approaches an existing suboptimal value \subopt{a}.\nomeqref {19}
    \nompageref{16}
  \item [{\vUser{}}]\begingroup A user\nomeqref {19}\nompageref{16}
  \item [{\vUserSet{}}]\begingroup A set of users \nomeqref {19}
    \nompageref{16}
  \item [{\vTime{}}]\begingroup A time index\nomeqref {19}
    \nompageref{16}
  \item [{\vSensorValueIndexed{}}]\begingroup  A sensor value generated in time \vTime by the user \vUser\nomeqref {19}
    \nompageref{16}

 \nomgroup{D}

  \item [{\vMaskingId{}}]\begingroup a masking mechanism, which consists of a parametric algorithm that masks the sensor values of a multiset \vmath{\vSensorValueSet}.\nomeqref {19}
    \nompageref{16}
  \item [{\vMaskingParametersIndexed}]\begingroup A parameterization \vParamId for a masking mechanism \vMaskingId.\nomeqref {19}
    \nompageref{16}
  \item [{\vUniformVar}]\begingroup A uniformly distributed variable.\nomeqref {19}
    \nompageref{16}
  \item [{\vPrivacySetting{\vSensorValueSet}}]\begingroup a privacy setting consisting of a masking mechanism \vMaskingId parameterized with parameters \vMaskingParametersIndexed and operating on a set of sensor values \vSensorValueSet. It produces a masked set of sensor values \vmath{\vPrivacySetting{\vSensorValueSet} = \vMaskedSet}, such that \vmath{\absol{\vSensorValueSet} = \absol{\vmath{\vMaskedSet}}}.\nomeqref {19}
    \nompageref{16}
  \item [{\vPrivacySet}]\begingroup A multiset of privacy values.\nomeqref {19}
    \nompageref{16}
  \item [{\vmath{\alpha_i}}]\begingroup A parameter that weights the importance of privacy factors for calculating the privacy values.\nomeqref {19}
    \nompageref{16}
  \item [{\vmath{\delta}}]\begingroup A parameter that denotes the amount of privacy that the \vDp{} sacrifices or gains over the existing privacy.\nomeqref {19}
    \nompageref{16}
  \item [{\vmath{c}}]\begingroup A parameter that denotes the amount of utility that a \vDc{} sacrifices or gains over the existing utility value.\nomeqref {19}
    \nompageref{16}
  \item [{\vUtilitySet}]\begingroup A multiset of utility values.\nomeqref {19}
    \nompageref{16}
  \item [{\vmath{\alpha_i}}]\begingroup A parameter that weights the importance of privacy factors for calculating the utility values.\nomeqref {19}
    \nompageref{16}
  \item [{\vmath{\gamma_i}}]\begingroup A parameter that weights the importance of utility factors for calculating the utility values.\nomeqref {19}
    \nompageref{16}

\end{thenomenclature}






\end{document}